\documentclass[11pt]{article}
\usepackage[margin=1.05in]{geometry}
\usepackage{amsmath,amssymb,amsthm}
\usepackage[dvipsnames]{xcolor}
\usepackage{graphicx,tikz,multirow,enumerate}
\usepackage[utf8]{inputenc}
\usepackage[T1]{fontenc}
\usepackage[english]{babel}
\usepackage{hyperref}
\hypersetup{colorlinks=true,urlcolor=black,citecolor=blue,linkcolor=blue}

\numberwithin{equation}{section}
\theoremstyle{plain}
\newtheorem{theorem}{Theorem}[section]
\newtheorem{lemma}[theorem]{Lemma}
\newtheorem{proposition}[theorem]{Proposition}
\newtheorem{corollary}[theorem]{Corollary}

\theoremstyle{definition}

\newtheorem{example}[theorem]{Example}
\newtheorem{remark}[theorem]{Remark}
\newtheorem{problem}[theorem]{Problem}

\title{Near-MDS codes of lengths $q+6$ and $q+7$ from conics in $\mathrm{PG}(2,q)$, $q$ odd}

\author{Sinan D\"onmez\thanks{Department of Mathematics, Faculty of Arts and Sciences,
Bursa Uluda\u{g} University, Bursa, T\"urkiye. ORCID: 0009-0000-1319-0630.}
\and Atilla Akp{\i}nar\thanks{Department of Mathematics, Faculty of Arts and Sciences,
Bursa Uluda\u{g} University, Bursa, T\"urkiye. ORCID: 0000-0002-7612-2448.}
\and Rumi Melih Pelen\thanks{Corresponding author. Department of Mathematics and
Statistics, University of South Florida, Tampa, FL, USA. E-mail: rmpelen@usf.edu.
ORCID: 0000-0002-0267-4821.}}
\date{}

\newcommand{\Fq}{\mathbb{F}_q}
\newcommand{\PG}{\mathrm{PG}}
\newcommand{\Code}{\mathcal{C}}
\DeclareMathOperator{\wt}{wt}

\begin{document}

\maketitle
\begin{abstract}
Near maximum distance separable (NMDS) codes of dimension~$3$ and length~$n$ over $\Fq$ are
equivalent to $(n,3)$-arcs in $\PG(2,q)$. For every odd prime power $q$ we construct, by adding
five suitable points to a conic of $\PG(2,q)$, a family of $[q+6,3,q+3]$ NMDS codes and determine
their weight distributions completely; three distinct weight enumerators occur, governed by two
explicit quadratic-character conditions on the parameters. For every odd prime power $q$, no code of our family is
monomially equivalent to a code of the recent $[q+6,3,q+3]$ NMDS family of Fan, Wang and Xu, even where the weight enumerators of the two
families coincide: the separating invariant is a triple of geometric data attached to the
underlying arc. Extending the configuration by a sixth point on a distinguished external line, we
further obtain $[q+7,3,q+4]$ NMDS codes, together with their weight distributions, for every
odd prime power $q\ge11$ (admissible parameters exist for no $q\le9$); the existence proof
combines exact and Weil-type character sum estimates with a finite computer verification. Our proofs
are purely geometric and rest on a simple counting identity for the trisecant lines of a point
set obtained by extending a conic. All the codes constructed are optimal locally recoverable
codes with locality $2$.

\end{abstract}

\noindent\textbf{Keywords:} Near-MDS codes, $(n,3)$-arcs, conics, weight distribution, code equivalence\\[1mm]
\textbf{Mathematics Subject Classification (2020):} 94B05, 51E21, 51E22, 11T24

\section{Introduction}

Let $\Fq$ be the finite field with $q$ elements. A $q$-ary $[n,k]$ linear code $\Code$ is a
$k$-dimensional subspace of $\Fq^n$; its minimum distance $d$ satisfies the Singleton bound
$d\le n-k+1$ \cite{Singleton}. Codes attaining the bound are called maximum distance separable
(MDS). The \emph{Singleton defect} of an $[n,k,d]$ code is $s(\Code)=n-k+1-d$
 \cite{FW97}; codes with $s(\Code)=1$ are called \emph{almost MDS} (AMDS) \cite{deBoer}, and
codes with $s(\Code)=s(\Code^{\perp})=1$, that is, AMDS codes whose dual is also
AMDS, are called \emph{near maximum distance separable} (NMDS). NMDS codes were
introduced by Dodunekov and Landjev \cite{DL95} and are closely related to combinatorial designs \cite{DT20}
and to arcs in finite projective spaces \cite{MMP}; they have found applications in secret sharing \cite{sss} and in
locally recoverable codes \cite{LH23}.

An $[n,3,n-3]$ NMDS code over $\Fq$ is equivalent to an $(n,3)$-arc in $\PG(2,q)$, that is, a set
of $n$ points meeting every line in at most three points and some line in exactly three points
\cite{DL95}. A natural way to produce $(n,3)$-arcs is to extend a maximal arc of $\PG(2,q)$ (an
oval for odd $q$, a hyperoval for even $q$) by adding suitable points. Wang and Heng
\cite{WH21} and Li and Heng \cite{LH23} obtained $[q+3,3,q]$ and $[q+4,3,q+1]$ NMDS codes in this
way, and Zhao, Du and Qiao \cite{ZDQ26} recently obtained further families of lengths up to
$q+3$ together with their locally recoverable properties; Xu, Fan and Han \cite{XFH24} obtained
two families of $[q+5,3,q+2]$ NMDS codes for any prime power $q$ (see also \cite{QDY26} for
families of lengths $q$ to $q+5$ in even characteristic); and, most recently, Fan, Wang and Xu \cite{FWX24} obtained a family of $[q+6,3,q+3]$
NMDS codes for odd $q$ and families of $[q+7,3,q+4]$ and $[q+m+2,3,q+m-1]$ NMDS codes for even
$q=2^m$. In a different direction, Lu and Zhou \cite{LZ25} initiated a systematic study of the
\emph{monomial equivalence} problem for NMDS codes obtained by extending arcs, showing that codes
with identical parameters, and even identical weight enumerators, arising from such constructions may or may not be equivalent.

In this paper we construct, for every odd prime power $q$, a family of $[q+6,3,q+3]$ NMDS codes
by adding five suitable points to a conic of $\PG(2,q)$, and we determine their weight
distributions completely (Theorem~\ref{thm:main}). The family exhibits exactly three weight
enumerators, distinguished by two explicit quadratic-character conditions on the two parameters
$b_1,b_2$ of the construction. Our codes have the same parameters as the odd-$q$ family of
\cite{FWX24}, but they are new: for every odd prime power $q$, no code of our family is monomially
equivalent to a code of theirs (Theorem~\ref{thm:fullineq} and Remark~\ref{rem:smallq}). The separation does not
always come from weight enumerators, which the two families may share (they realize the same
enumerators for $q\equiv1\pmod4$), but from a stronger geometric invariant: for $q\ge9$ any monomial equivalence must map the
conic onto the conic, so the triple $(i,t,e)$ introduced below is preserved; the family of
\cite{FWX24} always has $t=2$, while our construction always has $t\in\{3,4\}$.
In addition, the enumerator $A_{q+3}=(5q+9)(q-1)/2$, attained by our family precisely when
$q\equiv3\pmod4$ (Proposition~\ref{prop:real}), does not occur in \cite{FWX24} at all
(Corollary~\ref{cor:ineq}). Since monomially equivalent codes have equal weight enumerators, the
corresponding codes are inequivalent to those of \cite{FWX24}. In the terminology of \cite{LZ25},
our construction thus enlarges the known set of equivalence classes of $[q+6,3,q+3]$ NMDS codes.

Two features of our approach may be of independent interest. First, all proofs are purely
geometric, in contrast with the algebraic computations of \cite{XFH24,FWX24}. Second, the weight
distribution is obtained in a single step from a simple counting identity
(Lemma~\ref{lem:identity}): if $S$ is a $(q+6,3)$-arc consisting of a conic $\mathcal{O}$ together
with a $5$-set $T$ of further points, then the number $\tau_3$ of trisecant lines of $S$
(lines meeting $S$ in exactly three points) equals
\[
\tau_3=\frac{5(q-1)}{2}+i+t+e,
\]
where $i$ is the number of points of $T$ interior to $\mathcal{O}$, $t$ is the number of tangent
lines of $\mathcal{O}$ containing two points of $T$, and $e$ is the number of external lines
containing three points of $T$. Since $A_{q+3}=\tau_3(q-1)$ and $A_{q+3}$ determines the whole
weight distribution of an NMDS code \cite{DL95}, the identity reduces the computation of weight
enumerators to reading off the triple $(i,t,e)$ from the configuration.

Our second contribution concerns length $q+7$. For even $q$, the family of \cite{FWX24}
reaches length $q+7$, but for odd $q$ no NMDS codes of length $q+7$ arising from arcs that
extend a maximal arc appear to be known. (NMDS codes of length up to roughly $q+2\sqrt q$ do
exist via elliptic curves \cite{AGS21}; here we are concerned with explicit arc-extension
constructions and their weight distributions.) We show (Theorem~\ref{thm:q7}) that whenever the line
$l_{A,B}$ of our configuration is external, any point $P_{6}$ on it satisfying three further
quadratic-character conditions yields a $[q+7,3,q+4]$ NMDS code whose weight distribution is
again read off from the counting identity.
We prove (Proposition~\ref{prop:q7exist}) that admissible parameters exist for every odd
prime power $q\ge11$, and for none with $q\le9$.

The paper is organized as follows. Section~\ref{sec:pre} collects preliminaries and proves the
counting identity. Section~\ref{sec:main} presents the construction and the main theorem.
Section~\ref{sec:comp} compares our family with that of \cite{FWX24} and gives Magma-verified
examples. Section~\ref{sec:ext} presents the extension to length $q+7$.
Section~\ref{sec:conc} concludes with open problems. All computer verifications reported in
this paper were carried out in Magma \cite{Magma} and Python (see the data availability
statement).

\section{Preliminaries}\label{sec:pre}

Throughout, $q=p^m$ is an odd prime power and $\eta$ denotes the quadratic character of
$\Fq$: $\eta(x)=1$ if $x$ is a nonzero square, $\eta(x)=-1$ if $x$ is a nonsquare, and
$\eta(0)=0$.

\subsection{NMDS codes and $(n,3)$-arcs}

The \emph{weight} $\wt(x)$ of $x\in\Fq^{n}$ is the number of its nonzero coordinates; the
minimum distance of a linear code equals the minimum nonzero weight of its codewords. The
\emph{dual} of a code $\Code\subseteq\Fq^{n}$ is
$\Code^{\perp}=\{x\in\Fq^{n}: x\cdot c=0\ \text{for all}\ c\in\Code\}$. For $0\le i\le n$ we write
$A_{i}$ for the number of codewords of $\Code$ of weight $i$; the polynomial
$A(y)=1+\sum_{i=1}^{n}A_{i}y^{i}$ is the \emph{weight enumerator} of $\Code$. Two codes are
\emph{monomially equivalent} (also called \emph{linearly equivalent}, as in \cite{FWX24}) if
one is obtained from the other by permuting the coordinates
and multiplying each coordinate by a nonzero scalar; equivalently, if
$\Code_{2}=\{cM: c\in\Code_{1}\}$ for an $n\times n$ \emph{monomial matrix} $M$, that is, a
matrix with exactly one nonzero entry in each row and each column. Monomially equivalent codes
have the same weight enumerator.

Let $\Code$ be an $[n,k,n-k]$ NMDS code with generator matrix $G=(g_1,\dots,g_n)$, $g_i\in\Fq^k$,
$k\ge 3$. The columns of $G$ may be identified with distinct points of $\PG(k-1,q)$; write
$S_G=\{g_1,\dots,g_n\}$. For a nonzero $z\in\Fq^k$ let
$H_z=\{x\in\Fq^k: x\cdot z=0\}$ denote the corresponding hyperplane of $\PG(k-1,q)$.

\begin{lemma}\label{lem:weightline}
For nonzero $z\in\Fq^k$ we have $\wt(zG)=n-\#(H_z\cap S_G)$. In particular every hyperplane meets
$S_G$ in at most $n-d$ points.
\end{lemma}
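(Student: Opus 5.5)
The plan is to compute the weight of $zG$ coordinate by coordinate and then read off the bound from the minimum distance. The $j$-th coordinate of $zG$ is $z\cdot g_j$. It vanishes exactly when $g_j\in H_z$.

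We should note that membership in $H_z$ depends only on the projective point of $g_j$, since $H_z$ is a linear subspace and is closed under scalar multiples. So we may regard $H_z$ as a hyperplane of $\PG(k-1,q)$ and $S_G$ as a set of points.

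The columns are pairwise independent, hence distinct points. This follows because the dual is AMDS with $d^\perp=k\ge3$, so no two columns are proportional and none is zero. Therefore the number of coordinates $j$ with $z\cdot g_j=0$ is exactly $\#(H_z\cap S_G)$. Subtracting from $n$ gives
\[
\wt(zG)=n-\#(H_z\cap S_G).
\]

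For the second assertion, take $z\neq0$. Since $G$ has full rank $k$, the map $z\mapsto zG$ is injective, so $zG$ is a nonzero codeword. Hence $\wt(zG)\ge d$, and combining this with the identity gives $\#(H_z\cap S_G)\le n-d$.

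The only point needing care is the distinctness and nonvanishing of the columns, which uses $d^\perp\ge3$. I expect no real obstacle.
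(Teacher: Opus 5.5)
Your proposal is correct and follows the paper's proof step for step: the coordinatewise identity $\wt(zG)=n-\#\{j: g_j\in H_z\}$, then $d(\Code^{\perp})=k\ge3$ to see that the columns are nonzero and pairwise nonproportional (so coordinates correspond bijectively to the points of $S_G$), and finally $\wt(zG)\ge d$. Your explicit remark that $zG\neq0$ because $G$ has rank $k$ spells out a detail the paper leaves implicit.
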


\begin{proof}
The $i$-th coordinate of $zG$ is $g_{i}\cdot z$, so
$\wt(zG)=n-\#\{i: g_{i}\in H_{z}\}$. Since $d(\Code^{\perp})=k\ge3$, no column of $G$ is zero and
no two columns are proportional (either would produce a dual codeword of weight $\le2$), so
the $g_{i}$ represent $n$ distinct points of $\PG(k-1,q)$ and
$\#\{i: g_{i}\in H_{z}\}=\#(H_{z}\cap S_{G})$. The last assertion follows from
$\wt(zG)\ge d$.
\end{proof}

Recall that an $(n,3)$-\emph{arc} of $\PG(2,q)$ is a set of $n$ points meeting every line in
at most three points and some line in exactly three points \cite{DL95}. Thus an $[n,3,n-3]$
NMDS code corresponds to an $(n,3)$-arc $S_G$ of $\PG(2,q)$: every line meets $S_G$ in at most
$n-d=3$ points by Lemma~\ref{lem:weightline}, and a line meeting $S_G$ in exactly three points
exists because $\Code$ attains the weight $n-3$. Moreover
\begin{equation}\label{eq:Amin}
A_{n-3}=(q-1)\cdot\#\{\ell\ \text{a line of } \PG(2,q):\ \#(\ell\cap S_G)=3\},
\end{equation}
since each trisecant line $\ell=H_z$ accounts for the $q-1$ codewords $\lambda zG$,
$\lambda\in\Fq^{*}$, of minimum weight. The full weight distribution follows from $A_{n-3}$:

\begin{lemma}[{\cite{DL95}, Theorem 4.1}]\label{lem:DL}
Let $\Code$ be an $[n,k,n-k]$ NMDS code. Then, for $s\in\{1,\dots,k\}$,
\[
A_{n-k+s}=\binom{n}{k-s}\sum_{j=0}^{s-1}(-1)^{j}\binom{n-k+s}{j}\bigl(q^{s-j}-1\bigr)
+(-1)^{s}\binom{k}{s}A_{n-k}.
\]
\end{lemma}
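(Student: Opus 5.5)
The plan is the standard MacWilliams argument for NMDS codes. Write $n$ for the length. Because both $\Code$ and $\Code^{\perp}$ are AMDS, we have $d(\Code)=n-k$ and $d(\Code^{\perp})=k$. So $A_i=0$ for $1\le i<n-k$, and the dual weights satisfy $B_i=0$ for $1\le i<k$. The only possibly nonzero weights of $\Code$ are $n-k,n-k+1,\dots,n$, that is, $k+1$ unknowns $A_{n-k},\dots,A_n$.

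The main step is to apply the MacWilliams identities in their binomial-moment form. For $0\le r\le n$,
\[
\sum_{i=0}^{n-r}\binom{n-i}{r}A_i=q^{k-r}\sum_{i=0}^{r}\binom{n-i}{n-r}B_i .
\]
Since $B_i=0$ for $1\le i\le k-1$, the right-hand side equals $q^{k-r}\binom{n}{r}$ whenever $r\le k-1$. Taking $r=0,1,\dots,k-1$ gives $k$ linear equations:
\[
\binom{n}{r}+\sum_{i=n-k}^{n-r}\binom{n-i}{r}A_i=q^{k-r}\binom{n}{r},\qquad 0\le r\le k-1 .
\]
Set $i=n-k+s$. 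Then $\binom{n-i}{r}=\binom{k-s}{r}$, and the system becomes
\[
\sum_{s=0}^{k-r}\binom{k-s}{r}A_{n-k+s}=\binom{n}{r}\bigl(q^{k-r}-1\bigr).
\]
This system is triangular in the unknowns $A_n,A_{n-1},\dots,A_{n-k+1}$, with $A_{n-k}$ left as a free parameter. The equation with $r=k-s$ involves only $A_{n-k},\dots,A_{n-k+s}$, and the coefficient of $A_{n-k+s}$ is $\binom{k-s}{k-s}=1$. The solution is therefore unique once $A_{n-k}$ is fixed.

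It remains to verify that the stated closed form satisfies these equations. The cleanest route is binomial inversion. Define $F(r)=\binom{n}{r}(q^{k-r}-1)$ and rewrite the system as $\sum_{m\ge r}\binom{m}{r}A_{n-m}=F(r)$, where $m=k-s$. Inverting gives
\[
A_{n-m}=\sum_{r\ge m}(-1)^{r-m}\binom{r}{m}\bigl(F(r)-\delta_{r,k}\,\text{term}\bigr),
\]
where the unknown $A_{n-k}$ is moved to the right-hand side. For the $A_{n-k}$ contribution, the term $\binom{k}{m}A_{n-k}$ appears with sign $(-1)^{k-m}$ in the expression for $A_{n-m}$. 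Putting $m=k-s$ gives $(-1)^s\binom{k}{s}A_{n-k}$, which matches the statement.

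For the main sum, the identity $\binom{r}{m}\binom{n}{r}=\binom{n}{m}\binom{n-m}{r-m}$ pulls out the factor $\binom{n}{k-s}$. Reindex with $j=r-m$, and truncate at $j\le s-1$: the term $j=s$ (that is, $r=k$) carries $q^0-1=0$. After this, the sum should collapse to $\sum_j(-1)^j\binom{n-k+s}{j}(q^{s-j}-1)$, as required.

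The expected obstacle is the inversion range. The system only covers $r\le k-1$, while inversion sums over $r$ up to $k$. I expect this to be harmless, since the $r=k$ term of $F$ vanishes, but this is where care is needed. If the inversion bookkeeping becomes messy, the fallback is direct verification: substitute the claimed formula into the triangular system and use Vandermonde-type alternating sums. Uniqueness of the triangular solution then finishes the proof.
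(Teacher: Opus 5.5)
Your proof is correct. The paper gives no argument for this lemma; it quotes it from \cite[Theorem 4.1]{DL95}. Your proposal therefore supplies a self-contained derivation via the standard MacWilliams binomial-moment system, and the ``expected obstacle'' disappears once the inversion is set up carefully. Put $a_m=A_{n-m}$ and define $G(r)=\sum_{m=r}^{k}\binom{m}{r}a_m$ for \emph{all} $0\le r\le k$. Your system says $G(r)=F(r)$ for $r\le k-1$, while $G(k)=A_{n-k}$ by definition. So inverting the Pascal matrix on $\{0,\dots,k\}$ is legitimate and gives
\[
A_{n-m}=\sum_{r=m}^{k-1}(-1)^{r-m}\binom{r}{m}F(r)+(-1)^{k-m}\binom{k}{m}A_{n-k}\qquad(0\le m\le k-1).
\]
The $A_{n-k}$-term thus enters $G(k)$ with a plus sign. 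Your ``$F(r)-\delta_{r,k}\,\text{term}$'' muddles this, although your final coefficient $(-1)^{s}\binom{k}{s}$ is right. Nothing then needs to ``collapse''. With $m=k-s$, $j=r-m$ and $\binom{r}{m}\binom{n}{r}=\binom{n}{k-s}\binom{n-k+s}{j}$, the first sum is literally $\binom{n}{k-s}\sum_{j=0}^{s-1}(-1)^{j}\binom{n-k+s}{j}(q^{s-j}-1)$, since the range $r\le k-1$ gives exactly $j\le s-1$; for $k=3$ this reproduces \eqref{eq:DL3}.

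If you prefer not to quote the general binomial-moment identity, your system also follows by double counting. Since $d(\Code^{\perp})=k$, any $r\le k-1$ columns of a generator matrix are linearly independent, so exactly $q^{k-r}$ codewords vanish on any prescribed $r$ coordinates. Hence $\sum_{0\ne c\in\Code}\binom{n-\wt(c)}{r}=\binom{n}{r}(q^{k-r}-1)$.

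Compared with the bare citation, your derivation makes explicit that only $d(\Code)=n-k$ and $d(\Code^{\perp})=k$ are used, and that $A_{n-k}$ is the single free parameter. That is exactly what the paper relies on when it obtains $A_{n-3}$ from Lemma~\ref{lem:identity}.
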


For $k=3$, which is the only case used below, Lemma~\ref{lem:DL} reads
\begin{equation}\label{eq:DL3}
\begin{aligned}
A_{n-2}&=\tbinom{n}{2}(q-1)-3A_{n-3},\\
A_{n-1}&=n(q-1)(q+2-n)+3A_{n-3},\\
A_{n}&=(q^{3}-1)-n(q^{2}-1)+\tbinom{n}{2}(q-1)-A_{n-3};
\end{aligned}
\end{equation}
all weight enumerators in this paper are obtained by substituting the relevant value of
$A_{n-3}$ into \eqref{eq:DL3}.

\subsection{Conics and their point--line incidences}

For odd $q$ every oval of $\PG(2,q)$ is a conic (Segre's theorem \cite{Hirschfeld}); we work with
\begin{equation}\label{eq:conic}
\mathcal{O}=\bigl\{(x^{2},x,1): x\in\Fq\bigr\}\cup\{(1,0,0)\},
\end{equation}
a conic of $q+1$ points. Each line of $\PG(2,q)$ meets $\mathcal{O}$ in $0$, $1$ or $2$ points and
is called \emph{external}, \emph{tangent} or \emph{secant} accordingly. A point
$P\notin\mathcal{O}$ is \emph{exterior} if it lies on two tangents and \emph{interior} if it lies
on none; through an exterior point there pass $\tfrac{q-1}{2}$ secants, $\tfrac{q-1}{2}$ external
lines and $2$ tangents, while through an interior point there pass $\tfrac{q+1}{2}$ secants and
$\tfrac{q+1}{2}$ external lines \cite{Hirschfeld}.

We use the following affine model: identify $(x,y)$ with the projective point $(x,y,1)$,
write $(a)=(a,1,0)$ and $(\infty)=(1,0,0)$ for the points of the line at infinity
$L_{\infty}$, and for $a,b\in\Fq$ set
\[
l_{a,b}=\{(av+b,\,v): v\in\Fq\}\cup\{(a)\},\qquad
L_{a}=\{(v,a): v\in\Fq\}\cup\{(\infty)\}.
\]
Then $\mathcal{O}=\{(v^{2},v):v\in\Fq\}\cup\{(\infty)\}$, the line $L_{\infty}$ is the tangent at
$(\infty)$, and $l_{a,b}$ meets the affine part of $\mathcal{O}$ in the points $(v^2,v)$ with
$v^{2}-av-b=0$. Hence $l_{a,b}$ is secant, tangent or external according as
$\eta(a^{2}+4b)=1$, $a^{2}+4b=0$ or $\eta(a^{2}+4b)=-1$. Similarly, the tangents through an
affine point $(u_{0},v_{0})\notin\mathcal{O}$ are the lines $l_{a,u_0-av_0}$ with
$a^{2}-4v_{0}a+4u_{0}=0$, whose discriminant is $16(v_{0}^{2}-u_{0})$; therefore
\begin{equation}\label{eq:inout}
(u_{0},v_{0})\ \text{is exterior}\iff \eta\bigl(v_{0}^{2}-u_{0}\bigr)=1,\qquad
\text{interior}\iff \eta\bigl(v_{0}^{2}-u_{0}\bigr)=-1 .
\end{equation}

We shall repeatedly use the following standard facts on quadratic character sums: a Weil-type
bound and two classical exact evaluations; see
\cite[Theorems 5.41 and 5.48]{LN97}.

\begin{lemma}\label{lem:weil}
Let $f\in\Fq[x]$ have $m$ distinct roots in an algebraic closure $\overline{\Fq}$ of $\Fq$, and suppose $f$ is not a constant
multiple of a square in $\Fq[x]$. Then
\[
\Bigl|\sum_{x\in\Fq}\eta\bigl(f(x)\bigr)\Bigr|\le(m-1)\sqrt q .
\]
Moreover, the following exact evaluations hold: $\sum_{x\in\Fq}\eta(ax+b)=0$ for $a\neq0$, and
$\sum_{x\in\Fq}\eta(ax^{2}+bx+c)=-\eta(a)$ whenever $a\neq0$ and $b^{2}-4ac\neq0$.
\end{lemma}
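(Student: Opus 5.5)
The statement collects three standard facts, so the plan is to reduce the first to the literature and prove the two exact evaluations directly. For the bound, I will take $\eta$ as the multiplicative character of order $2$ and apply \cite[Theorem 5.41]{LN97} to it. That theorem gives $\bigl|\sum_{x}\chi(f(x))\bigr|\le(m-1)\sqrt q$ whenever $\chi$ has order $d$ and $f$ is not of the form $c\,g(x)^{d}$. With $d=2$ its hypothesis is exactly our assumption that $f$ is not a constant multiple of a square, and $m$ is the number of distinct roots of $f$ in $\overline{\Fq}$. Nothing further needs checking here beyond matching notation, since the convention $\eta(0)=0$ agrees with the extension of the character used in \cite{LN97}.

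For the linear evaluation, I will use that $x\mapsto ax+b$ is a bijection of $\Fq$ when $a\neq0$. This turns the sum into $\sum_{y\in\Fq}\eta(y)$, which is $0$ because $\Fq^{*}$ contains exactly $\tfrac{q-1}{2}$ squares and $\tfrac{q-1}{2}$ nonsquares, and $\eta(0)=0$.

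For the quadratic evaluation, I will first complete the square, which is allowed because $q$ is odd. Writing $ax^{2}+bx+c=a\bigl((x+\tfrac{b}{2a})^{2}-\delta\bigr)$ with $\delta=\tfrac{b^{2}-4ac}{4a^{2}}\neq0$, multiplicativity of $\eta$ and the translation $y=x+\tfrac{b}{2a}$ give
\[
\sum_{x}\eta(ax^{2}+bx+c)=\eta(a)\sum_{y}\eta(y^{2}-\delta).
\]
It then suffices to show that $\sum_{y}\eta(y^{2}-\delta)=-1$. For this I will count the number $N$ of pairs $(y,z)\in\Fq^{2}$ with $z^{2}=y^{2}-\delta$. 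On one hand, each $y$ contributes $1+\eta(y^{2}-\delta)$ values of $z$, so $N=q+\sum_{y}\eta(y^{2}-\delta)$. On the other hand, the substitution $u=y-z$, $v=y+z$ is invertible in odd characteristic and converts the equation into $uv=\delta$. Since $\delta\neq0$, this has exactly $q-1$ solutions, so $N=q-1$. Comparing the two counts gives the value $-1$, and hence the claimed total $-\eta(a)$.

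None of these steps is a serious obstacle, since the only nonelementary input is the cited Weil-type estimate. The point needing the most care is the hypothesis in the bound. The condition must be ``not a constant multiple of a square in $\Fq[x]$'' rather than ``squarefree'': for example, $f=c\,g^{2}$ with $\eta(c)=\pm1$ gives a sum of size about $q$. When I apply the lemma later, I will need to verify this hypothesis for each polynomial to which it is applied.
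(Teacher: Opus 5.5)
Your proposal is correct, and it follows the paper for the Weil-type bound while taking a more self-contained route for the two exact evaluations.

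\textbf{The bound.} You do what the paper does and appeal to \cite[Theorem 5.41]{LN97}. Lidl and Niederreiter state that theorem for a monic $f$ of positive degree, with a scalar $a$ inside the character. So one writes $f=af_0$ with $f_0$ monic. The hypothesis \emph{not a constant multiple of a square} is then exactly \emph{$f_0$ is not a square}, and it also excludes constant $f$. Your remark that nothing beyond notation needs checking is therefore accurate.

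\textbf{The exact evaluations.} The paper gives no argument here and simply cites \cite[Theorem 5.48]{LN97}, together with the trivial orthogonality fact for the linear sum. You prove them directly instead:
\begin{enumerate}
\item the linear sum via the bijection $x\mapsto ax+b$;
\item the quadratic sum by completing the square and counting the $q-1$ solutions of $uv=\delta$, using the invertible change of variables $u=y-z$, $v=y+z$ in odd characteristic.
\end{enumerate}

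\textbf{Comparison.} Your approach leaves the cited Weil estimate as the only nonelementary input, at the cost of a few lines. The citation route is shorter and also covers the degenerate case $b^{2}-4ac=0$, which the paper never needs. Your counting argument would handle that case as well: $uv=0$ has $2q-1$ solutions, giving the value $(q-1)\eta(a)$.

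One cosmetic point: rename your $\delta$, since the paper already uses $\delta=(b_1-b_2)^2-b_1$ for a different quantity.
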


\subsection{A counting identity for trisecants}

\begin{lemma}\label{lem:identity}
Let $\mathcal{O}$ be a conic of $\PG(2,q)$, $q$ odd, let $T$ be a set of $m\ge 1$ points of
$\PG(2,q)\setminus\mathcal{O}$, and suppose $S=\mathcal{O}\cup T$ is a $(q+1+m,3)$-arc. Let
\begin{gather*}
i=\#\{P\in T: P\ \text{interior to}\ \mathcal{O}\},\qquad
t=\#\{\ell\ \text{tangent}: \#(\ell\cap T)=2\},\\
e=\#\{\ell\ \text{external}: \#(\ell\cap T)=3\}.
\end{gather*}
Then the number of trisecant lines of $S$ is
\[
\tau_{3}:=\#\{\ell: \#(\ell\cap S)=3\}=\frac{m(q-1)}{2}+i+t+e .
\]
\end{lemma}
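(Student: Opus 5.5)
We classify the trisecant lines of $S$ by their type with respect to $\mathcal{O}$ and count each class separately. A line $\ell$ meets $\mathcal{O}$ in $\#(\ell\cap\mathcal{O})\in\{0,1,2\}$ points, so $\#(\ell\cap S)=3$ forces $\#(\ell\cap T)=3-\#(\ell\cap\mathcal{O})$. We will write
\[
\tau_3=\sigma+t+e,\qquad \sigma=\#\{\ell\ \text{secant}:\#(\ell\cap T)=1\}.
\]
Here $\sigma$ counts the secant lines meeting $T$ in exactly one point. The tangent lines meeting $T$ in two points are exactly those counted by $t$, and the external lines meeting $T$ in three points are exactly those counted by $e$. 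Thus the identity reduces to proving $\sigma=\frac{m(q-1)}{2}+i$.

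To compute $\sigma$, we use the arc hypothesis. A secant line already contains two points of $\mathcal{O}$, so it contains at most one point of $T$. Hence the secants through distinct points of $T$ are distinct lines, and $\sigma$ is simply the sum over $P\in T$ of the number of secants through $P$. By the incidence facts recalled in the preliminaries:
\begin{itemize}
\item an exterior point lies on $\frac{q-1}{2}$ secants;
\item an interior point lies on $\frac{q+1}{2}=\frac{q-1}{2}+1$ secants.
\end{itemize}
Summing over the $m$ points of $T$, of which $i$ are interior, gives $\sigma=m\frac{q-1}{2}+i$.

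The only point needing care is that no trisecant has been double-counted or omitted. This holds for two reasons. First, the three classes of lines (secant, tangent, external) are disjoint, and together they exhaust all lines. Second, within the secant class the attribution of a line to its unique point of $T$ is well defined, again by the $(n,3)$-arc property. Apart from this bookkeeping, the argument is a direct count and presents no real obstacle.
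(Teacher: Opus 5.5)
Your proposal is correct and is essentially the paper's proof: both split the trisecants into secants with one point of $T$, tangents with two, and external lines with three, and count the first class as $\sum_{P\in T}\sigma(P)=\frac{m(q-1)}{2}+i$, using the arc condition to rule out two points of $T$ on a secant. Your observation that $\#(\ell\cap T)=3-\#(\ell\cap\mathcal{O})$ characterizes trisecants is a slightly cleaner way to organize the same bookkeeping, since it makes the tangent and external classes match $t$ and $e$ directly by definition.
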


\begin{proof}
No three points of $\mathcal{O}$ are collinear, so every trisecant of $S$ contains $1$, $2$ or
$3$ points of $T$. A trisecant containing exactly one point of $T$ contains two points of
$\mathcal{O}$, hence is a secant of $\mathcal{O}$; conversely, every secant of $\mathcal{O}$
through a point of $T$ carries exactly three points of $S$, because a fourth point of $S$ on it
would violate the arc condition. Moreover no secant contains two points of $T$ (again four
points of $S$ would be collinear), so these trisecants are counted without repetition by
$\sum_{P\in T}\sigma(P)$, where $\sigma(P)$ is the number of secants of $\mathcal{O}$ through
$P$; by the incidence counts recalled above, $\sigma(P)=\tfrac{q-1}{2}$ or $\tfrac{q+1}{2}$
according as $P$ is exterior or interior, whence
$\sum_{P\in T}\sigma(P)=\tfrac{m(q-1)}{2}+i$.

A trisecant containing exactly two points of $T$ contains exactly one point of $\mathcal{O}$,
i.e.\ is a tangent; conversely a tangent carrying two points of $T$ is a trisecant (it cannot
carry a third point of $T$ by the arc condition). These are counted by $t$. Finally, a trisecant
containing three points of $T$ contains no point of $\mathcal{O}$, i.e.\ is external, and
conversely; these are counted by $e$. Summing the three contributions gives the identity.
\end{proof}

\section{The construction and the main theorem}\label{sec:main}

Fix $b_{1},b_{2}\in\Fq$ and consider the following five points:
\[
P_{1}=(1),\quad P_{2}=(0),\quad
P_{3}=\Bigl(b_{1},\,b_{1}+\tfrac14\Bigr),\quad
P_{4}=\bigl(b_{1},\,b_{1}-b_{2}\bigr),\quad
P_{5}=\bigl(0,\,-b_{2}\bigr).
\]
Geometrically: $P_{1},P_{2}$ lie on the tangent $L_{\infty}$. The lines through $P_{1}=(1)$
other than $L_{\infty}$ are the $l_{1,b}$, $b\in\Fq$, and by the criterion of
Section~\ref{sec:pre} the line $l_{1,b}$ is tangent precisely when $1+4b=0$; thus the second
tangent through $P_{1}$ is $l_{1,-1/4}$, touching $\mathcal{O}$ at $(1/4,1/2)$, the double root
of $v^{2}-v+\tfrac14$. Similarly the lines through $P_{2}=(0)$ other than $L_{\infty}$ are the
$l_{0,b}$, tangent precisely when $b=0$, so the second tangent through $P_{2}$ is $l_{0,0}$,
touching at the origin. Further, $P_{3}=l_{1,-1/4}\cap l_{0,b_{1}}$,
$P_{4}=l_{0,b_{1}}\cap l_{1,b_{2}}$ and $P_{5}=l_{0,0}\cap l_{1,b_{2}}$, where $l_{0,b_{1}}$ and
$l_{1,b_{2}}$ will be external lines. The configuration is depicted in
Figure~\ref{fig:config}.

\begin{figure}[htb]
\centering
\begin{tikzpicture}[scale=0.88,
  tang/.style={blue!70!black, thick},
  seca/.style={green!55!black, thick},
  extl/.style={red!75!black, thick},
  pt/.style={circle, fill=black, inner sep=1.4pt},
  cpt/.style={circle, fill=black, inner sep=1.1pt},
  lb/.style={font=\small}]
\draw[thick] (0,0) circle (1);
\draw[tang] (-2.6,1) -- (2.6,1) node[lb, right] {$L_{\infty}$};
\draw[tang] (-2.0,1.464) -- (0.88,-3.524);            
\node[lb, tang, anchor=west] at (0.95,-3.55) {$l_{1,-1/4}$};
\draw[tang] (2.0,1.464) -- (-0.88,-3.524);            
\node[lb, tang, anchor=east] at (-0.95,-3.55) {$l_{0,0}$};
\draw[extl] (2.0,1.804) -- (-0.11,-4.526);            
\node[lb, extl] at (2.0,2.08) {$l_{0,b_{1}}$};
\draw[extl] (-2.0,1.804) -- (0.11,-4.526);            
\node[lb, extl] at (-2.0,2.08) {$l_{1,b_{2}}$};
\draw[extl, dashed] (-1.35,-2.804) -- (1.35,-2.804);  
\node[lb, extl, anchor=west] at (1.5,-2.804) {$l_{A,B}$};
\draw[seca] (-1.95,1.0765) -- (1.75,-0.2185);
\node[lb, seca, anchor=west] at (1.8,-0.22) {secant};
\node[pt] at (-1.732,1) {}; \node[lb, anchor=east] at (-1.95,0.78) {$P_{1}$};
\node[pt] at (1.732,1) {};  \node[lb, anchor=west] at (1.95,0.78) {$P_{2}$};
\node[pt] at (0.464,-2.804) {};  \node[lb, anchor=west] at (0.68,-2.58) {$P_{3}$};
\node[pt] at (0,-4.196) {};      \node[lb, anchor=north] at (0,-4.65) {$P_{4}$};
\node[pt] at (-0.464,-2.804) {}; \node[lb, anchor=east] at (-0.68,-2.58) {$P_{5}$};
\node[cpt] at (0,1) {};          \node[lb, anchor=north] at (0,0.86) {$O_{q+1}$};
\node[cpt] at (-0.866,-0.5) {};  \node[lb, anchor=west] at (-0.72,-0.40) {$O_{j}$};
\node[cpt] at (0.866,-0.5) {};   \node[lb, anchor=east] at (0.72,-0.40) {$O_{1}$};
\begin{scope}[shift={(-3.7,-3.2)}]
\draw[tang] (0,0) -- (0.55,0); \node[lb, anchor=west] at (0.6,0) {tangent};
\draw[seca] (0,-0.42) -- (0.55,-0.42); \node[lb, anchor=west] at (0.6,-0.42) {secant};
\draw[extl] (0,-0.84) -- (0.55,-0.84); \node[lb, anchor=west] at (0.6,-0.84) {external};
\end{scope}
\end{tikzpicture}
\caption{The configuration of Theorem~\ref{thm:main}, drawn in the real plane after a projective
change of coordinates (so that the tangent $L_{\infty}$ becomes an ordinary line). Blue lines are
tangent to the conic, red lines are external (passants), and one of the $\frac{q-1}{2}$ secants
through $P_{1}$ is shown in green; every secant through a point of $T$ is a trisecant of $S$. The
tangents $L_{\infty}$, $l_{1,-1/4}$, $l_{0,0}$ each carry two points of $T$, and the external
lines $l_{0,b_{1}}$, $l_{1,b_{2}}$ each carry three. The line $l_{A,B}$ joining $P_{3}$ and
$P_{5}$ (dashed) is external when $\eta(\Delta)=-1$, as drawn, and tangent when $\Delta=0$;
likewise $P_{4}$ is drawn exterior, which corresponds to $\eta(\delta)=1$. The tangency points
$(\infty)$, $(1/4,1/2)$ and $(0,0)$ are labelled $O_{q+1}$, $O_{j}$ and $O_{1}$ respectively.}
\label{fig:config}
\end{figure}

Set
\begin{equation}\label{eq:Delta}
\Delta:=b_{1}^{2}+4\bigl(b_{1}^{2}b_{2}+b_{1}b_{2}^{2}\bigr)+b_{1}b_{2},
\qquad
\delta:=(b_{1}-b_{2})^{2}-b_{1}.
\end{equation}
Here $\Delta$ is, up to the square factor $\bigl(b_1+b_2+\tfrac14\bigr)^{2}$, the discriminant of
the line $l_{A,B}$ joining $P_{3}$ and $P_{5}$, and $\delta$ is, up to the square factor $16$,
the discriminant governing the tangents through $P_{4}$. A direct computation gives the
factorization
\begin{equation}\label{eq:Deltafact}
\Delta=4\,b_{1}\Bigl(b_{2}+\tfrac14\Bigr)(b_{1}+b_{2}),
\end{equation}
so that under the hypotheses $\eta(b_{1})=-1$, $\eta(1+4b_{2})=-1$ one has $\Delta=0$ if and
only if $b_{2}=-b_{1}$, and $\eta(\Delta)=-\eta\bigl((b_{2}+\tfrac14)(b_{1}+b_{2})\bigr)$.
Moreover, by \eqref{eq:inout},
\begin{equation}\label{eq:P4}
P_{4}\ \text{is exterior}\iff \eta(\delta)=1,\qquad
P_{4}\ \text{is interior}\iff \eta(\delta)=-1 .
\end{equation}
Note that $\eta(b_{1})=-1$ forces $\delta\neq0$, since $\delta=0$ would make
$b_{1}=(b_{1}-b_{2})^{2}$ a square.

Define the $3\times(q+6)$ matrix
\begin{equation}\label{eq:G}
G_{(b_{1},b_{2})}=
\begin{pmatrix}
\alpha_{1}^{2} & \cdots & \alpha_{q}^{2} & 1 & 1 & 0 & b_{1} & b_{1} & 0\\[2pt]
\alpha_{1} & \cdots & \alpha_{q} & 0 & 1 & 1 & \dfrac{4b_{1}+1}{4} & b_{1}-b_{2} & -b_{2}\\[4pt]
1 & \cdots & 1 & 0 & 0 & 0 & 1 & 1 & 1
\end{pmatrix},
\end{equation}
where $\Fq=\{\alpha_{1},\dots,\alpha_{q}\}$, and let $\Code_{(b_{1},b_{2})}$ be the linear code over
$\Fq$ generated by $G_{(b_{1},b_{2})}$. The first $q+1$ columns are the conic \eqref{eq:conic}
and the last five columns are $P_{1},\dots,P_{5}$.

\begin{theorem}\label{thm:main}
Let $q$ be an odd prime power and let $b_{1},b_{2}\in\Fq$ satisfy
\[
\eta(b_{1})=-1,\qquad \eta(1+4b_{2})=-1,\qquad b_{1}+b_{2}\neq-\tfrac14,\qquad
\Delta=0\ \text{ or }\ \eta(\Delta)=-1 .
\]
Then $\Code_{(b_{1},b_{2})}$ is a $[q+6,3,q+3]$ NMDS code over $\Fq$ with
\[
A_{q+3}=\Bigl(\tfrac{5(q-1)}{2}+i+t+e\Bigr)(q-1),
\qquad
(i,t,e)=\Bigl(\tfrac{1-\eta(\delta)}{2},\ 3+[\Delta=0],\ 2\Bigr)
\]
(here $[\,\cdot\,]$ denotes the Iverson bracket, equal to $1$ if the enclosed condition holds
and to $0$ otherwise); that is,
\[
A_{q+3}=
\begin{cases}
\dfrac{(5q+9)(q-1)}{2}, & \Delta=0,\ \eta(\delta)=-1,\\[6pt]
\dfrac{(5q+7)(q-1)}{2}, & \Delta=0,\ \eta(\delta)=1,\quad\text{or}\quad
\eta(\Delta)=-1,\ \eta(\delta)=-1,\\[6pt]
\dfrac{(5q+5)(q-1)}{2}, & \eta(\Delta)=-1,\ \eta(\delta)=1 .
\end{cases}
\]
The corresponding weight enumerators are, respectively,
\begin{align*}
A(y)&=1+\tfrac{(5q+9)(q-1)}{2}\,y^{q+3}+\tfrac{(q-1)(q^{2}-4q+3)}{2}\,y^{q+4}
+\tfrac{(q-1)(7q-21)}{2}\,y^{q+5}+\tfrac{(q-1)(q^{2}-6q+11)}{2}\,y^{q+6},\\[3pt]
A(y)&=1+\tfrac{(5q+7)(q-1)}{2}\,y^{q+3}+\tfrac{(q-1)(q^{2}-4q+9)}{2}\,y^{q+4}
+\tfrac{(q-1)(7q-27)}{2}\,y^{q+5}+\tfrac{(q-1)(q^{2}-6q+13)}{2}\,y^{q+6},\\[3pt]
A(y)&=1+\tfrac{(5q+5)(q-1)}{2}\,y^{q+3}+\tfrac{(q-1)(q^{2}-4q+15)}{2}\,y^{q+4}
+\tfrac{(q-1)(7q-33)}{2}\,y^{q+5}+\tfrac{(q-1)(q^{2}-6q+15)}{2}\,y^{q+6}.
\end{align*}
\end{theorem}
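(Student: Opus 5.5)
The proof proceeds in three stages. First, show that $S=\mathcal{O}\cup T$, with $T=\{P_1,\dots,P_5\}$, is a $(q+6,3)$-arc disjoint from $\mathcal{O}$. Second, read off $(i,t,e)$ and apply Lemma~\ref{lem:identity} with $m=5$. Third, substitute $A_{q+3}=\tau_3(q-1)$ into \eqref{eq:DL3}. A $(q+6,3)$-arc gives a code with $d\ge q+3$ by Lemma~\ref{lem:weightline}. For NMDS we also need the dual to have minimum distance exactly $3$, i.e.\ no two columns proportional (the points are distinct) and some three collinear (e.g.\ $L_\infty$). The standard equivalence recalled before Lemma~\ref{lem:weightline} (cf.\ \cite{DL95}) then gives that $\Code_{(b_1,b_2)}$ is $[q+6,3,q+3]$ NMDS.

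\textbf{Disjointness and the arc property.} $P_1,P_2\in L_\infty$ are not $(\infty)$, so they are off $\mathcal{O}$. $P_3,P_4,P_5$ lie on $l_{0,b_1}$ or $l_{1,b_2}$, which are external since $\eta(b_1)=-1$ and $\eta(1+4b_2)=-1$; hence these points are also off $\mathcal{O}$. The arc condition says no line contains four points of $S$. I check this by the number $k$ of points of $T$ on the line.

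For $k\ge 2$, I enumerate the ten lines joining pairs of $T$:
\[
L_\infty\ni P_1,P_2;\quad l_{1,-1/4}\ni P_1,P_3;\quad l_{0,0}\ni P_2,P_5;\quad l_{0,b_1}\ni P_2,P_3,P_4;\quad l_{1,b_2}\ni P_1,P_4,P_5;
\]
together with the line $P_3P_5=l_{A,B}$. I then verify that these are all the distinct lines and that none contains four points of $T$, which uses conditions like $b_1\ne 0$, $b_2\ne -1/4$ and $b_1+b_2\ne -1/4$. The tangents ($L_\infty$, $l_{1,-1/4}$, $l_{0,0}$) carry exactly two points of $T$ and one of $\mathcal{O}$. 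The external lines $l_{0,b_1},l_{1,b_2}$ carry three points of $T$ and none of $\mathcal{O}$. The line $l_{A,B}$ carries only $P_3,P_5$, and meets $\mathcal{O}$ in $1$ or $0$ points exactly because $\Delta=0$ or $\eta(\Delta)=-1$; the hypothesis $b_1+b_2\neq -1/4$ ensures the square factor $(b_1+b_2+\tfrac14)^2$ relating $\Delta$ to the discriminant is nonzero. This is precisely where the hypothesis excluding $\eta(\Delta)=1$ is needed, since a secant $l_{A,B}$ would carry four points of $S$.

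For $k=1$, any line through one point of $T$ meets $\mathcal{O}$ in at most two points, so it carries at most three points of $S$. For $k=0$ the line meets only $\mathcal{O}$, in at most two points.

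\textbf{Reading off $(i,t,e)$.} $P_1,P_2$ lie on tangents, hence are exterior. $P_3$ lies on $l_{1,-1/4}$, hence is exterior. $P_5$ lies on $l_{0,0}$, hence is exterior. $P_4$ is interior iff $\eta(\delta)=-1$ by \eqref{eq:P4}, so $i=(1-\eta(\delta))/2$. From the list above, $t=3+[\Delta=0]$ and $e=2$. Lemma~\ref{lem:identity} then gives $\tau_3$. The three cases of $A_{q+3}$, and the enumerators, follow from \eqref{eq:DL3} with $n=q+6$; for instance $A_{q+4}=\binom{q+6}{2}(q-1)-3A_{q+3}$. I expect only routine simplification here.

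\textbf{Main obstacle.} The delicate step is the exhaustive pair-line check: confirming that the join of each remaining pair coincides with one of the listed lines, and that no coincidences such as $P_3=P_4$ or $P_4=P_5$ occur. This reduces to small computations in the affine model, $P_3\neq P_4$ iff $b_2\neq -1/4$ and $P_4\neq P_5$ iff $b_1\neq 0$, both guaranteed by the $\eta$ hypotheses. I would carry it out pair by pair, computing each joining line explicitly as $l_{a,b}$ or $L_a$.
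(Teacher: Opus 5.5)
Your proposal is correct and follows the paper's proof essentially step for step. It uses the same six lines covering the ten pairs of $T$, with $b_1+b_2\neq-\tfrac14$ and the condition on $\Delta$ making $l_{A,B}$ tangent or external, the same reading of $(i,t,e)$, and then Lemma~\ref{lem:identity} together with \eqref{eq:DL3}. The differences are minor: you place $P_3,P_4,P_5$ off $\mathcal{O}$ at once because they lie on the external lines $l_{0,b_1},l_{1,b_2}$ (the paper checks each point separately), and you check by explicit computation that no distinguished line carries an extra point of $T$, whereas the paper reduces this to the pairwise distinctness of the six lines via uniqueness of the line through two points.
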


\begin{proof}
Write $T=\{P_{1},\dots,P_{5}\}$ and $S=\mathcal{O}\cup T$.

\emph{Step 1: the points are well defined, pairwise distinct and off the conic.}
The points $P_{1},P_{2}$ are distinct points of $L_{\infty}$, hence differ from the affine
points $P_{3},P_{4},P_{5}$, and they are off $\mathcal{O}$ since the only point of
$\mathcal{O}$ on $L_{\infty}$ is $(\infty)$. Since $\eta(b_{1})=-1$ we have $b_{1}\neq0$, so
$P_{3}$ and $P_{4}$, whose first coordinate is $b_{1}$, both differ from $P_{5}$, whose first
coordinate is $0$; and $P_{3}\neq P_{4}$ because $b_{2}\neq-\tfrac14$
(indeed $\eta(1+4b_{2})=-1$). The point $P_{3}$ is off $\mathcal{O}$ since
$b_{1}=(b_{1}+\tfrac14)^{2}$ would make $b_{1}$ a square; $P_{4}$ is off $\mathcal{O}$ since
$\delta\neq0$; and $P_{5}$ is off $\mathcal{O}$ since $b_{2}=0$ is excluded by
$\eta(1+4b_{2})=-1$.

\emph{Step 2: every pair of points of $T$ lies on a tangent or external line.}
The six distinguished lines and the points of $T$ they carry are:
\[
\begin{array}{lll}
L_{\infty} & \text{tangent at } (\infty) & \ni P_{1},P_{2};\\
l_{1,-1/4} & \text{tangent at } (1/4,1/2) & \ni P_{1},P_{3};\\
l_{0,0} & \text{tangent at } (0,0) & \ni P_{2},P_{5};\\
l_{0,b_{1}} & \text{external, since } \eta(4b_{1})=\eta(b_{1})=-1 & \ni P_{2},P_{3},P_{4};\\
l_{1,b_{2}} & \text{external, since } \eta(1+4b_{2})=-1 & \ni P_{1},P_{4},P_{5};\\
l_{A,B} & \text{tangent or external, by } \Delta=0 \text{ or } \eta(\Delta)=-1 & \ni P_{3},P_{5},
\end{array}
\]
where $l_{A,B}$ is the line joining $P_{3}$ and $P_{5}$: their second coordinates
$b_{1}+\tfrac14$ and $-b_{2}$ differ precisely because $b_{1}+b_{2}\neq-\tfrac14$, so the join
is not a line $L_{a}$ through $(\infty)$, and solving
$A\bigl(b_{1}+\tfrac14\bigr)+B=b_{1}$ and $-Ab_{2}+B=0$ gives
$A=\dfrac{b_{1}}{\,b_{1}+b_{2}+\tfrac14\,}$, $B=Ab_{2}$, and
\[
A^{2}+4B=\frac{b_{1}^{2}+4b_{1}b_{2}\bigl(b_{1}+b_{2}+\tfrac14\bigr)}
{\bigl(b_{1}+b_{2}+\tfrac14\bigr)^{2}}
=\frac{\Delta}{\bigl(b_{1}+b_{2}+\tfrac14\bigr)^{2}} .
\]
(Here $P_{2}\in l_{0,b_{1}}$ and $P_{1}\in l_{1,b_{2}}$ because $(0)$ and $(1)$ are the points at
infinity of those lines.) These six lines cover all ten pairs of $T$. Consequently no secant of
$\mathcal{O}$ contains two points of $T$.

\emph{Step 3: no four points of $S$ are collinear.}
By Step 2 the unique line through each pair of points of $T$ is one of the six lines listed
there; every other line therefore carries at most one point of $T$, hence at most three points
of $S$. A line carrying four points of $S$ would thus be one of the six lines; since its
points of $\mathcal{O}$ and its listed points of $T$ total at most three ($1+2$ for a tangent,
$0+3$ for an external line), it would contain a point $X\in T$ not listed on it. Pairing $X$
with a listed point $Y$ of that
line, the pair $\{X,Y\}$ lies on the line assigned to it in Step 2, a different entry of the
list, and since only one line passes through $X$ and $Y$, two of the six lines would coincide.
It thus suffices to show that the six lines are pairwise distinct. For the five explicit lines
this is immediate: $L_{\infty}$ is the line at infinity, the lines $l_{1,\cdot}$ and
$l_{0,\cdot}$ have the distinct points at infinity $(1)$ and $(0)$, and
$l_{1,-1/4}\neq l_{1,b_{2}}$, $l_{0,0}\neq l_{0,b_{1}}$ since a tangent is never external.
Finally, $P_{3}$ lies on $l_{1,-1/4}$ and $l_{0,b_{1}}$, and $P_{5}$ on $l_{1,b_{2}}$ and
$l_{0,0}$; if $l_{A,B}$ coincided with one of the five explicit lines, then $P_{3}$ or $P_{5}$
would lie on two distinct lines of the pencil through $P_{1}$ or through $P_{2}$, hence equal
$P_{1}$ or $P_{2}$, which is absurd for an affine point. Hence $S$ is a $(q+6,3)$-arc (three collinear points exist, e.g.\ on
$l_{0,b_{1}}$), and by Lemma~\ref{lem:weightline} the code $\Code_{(b_{1},b_{2})}$ is a
$[q+6,3,q+3]$ NMDS code.

\emph{Step 4: the triple $(i,t,e)$.}
The points $P_{1},P_{2},P_{3},P_{5}$ each lie on a tangent of $\mathcal{O}$ (Step 2), hence are
exterior; $P_{4}$ is exterior or interior according to \eqref{eq:P4}, so
$i=\bigl(1-\eta(\delta)\bigr)/2$. The tangents carrying two points of $T$ are exactly
$L_{\infty}$, $l_{1,-1/4}$, $l_{0,0}$, together with $l_{A,B}$ when $\Delta=0$; no other tangent
can carry two points of $T$ since all pairs are accounted for in Step 2. Hence
$t=3+[\Delta=0]$. The external lines carrying three points of $T$ are exactly $l_{0,b_{1}}$ and
$l_{1,b_{2}}$ (a third such line would carry a pair not listed in Step 2 or coincide with
$l_{A,B}$, which carries only $P_{3},P_{5}$), so $e=2$. Lemma~\ref{lem:identity} and
\eqref{eq:Amin} now give the stated values of $A_{q+3}$, and Lemma~\ref{lem:DL} yields the three
weight enumerators.
\end{proof}

\begin{proposition}\label{prop:real}
Suppose $\eta(b_{1})=-1$ and $\eta(1+4b_{2})=-1$. If $\Delta=0$, then $b_{2}=-b_{1}$,
$\delta=b_{1}(4b_{1}-1)$ and $\eta(\delta)=\eta(-1)$. Consequently the case
$\Delta=0$, $\eta(\delta)=-1$ of Theorem~\ref{thm:main} is realizable precisely when
$q\equiv3\pmod4$, and the case $\Delta=0$, $\eta(\delta)=1$ precisely when
$q\equiv1\pmod4$; in either case the number of admissible parameters $b_{1}$ equals
$\bigl(q-\eta(-1)\bigr)/4$.
\end{proposition}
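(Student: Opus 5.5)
The plan is to read the condition $\Delta=0$ off the factorization \eqref{eq:Deltafact}, compute $\eta(\delta)$ directly, and count the admissible $b_1$ with Lemma~\ref{lem:weil}.

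\textbf{Step 1: $\Delta=0$ forces $b_2=-b_1$.} The hypothesis $\eta(b_1)=-1$ gives $b_1\neq0$. The hypothesis $\eta(1+4b_2)=-1$ gives $b_2+\tfrac14\neq0$. Since $\Delta=4b_1(b_2+\tfrac14)(b_1+b_2)$ by \eqref{eq:Deltafact}, $\Delta=0$ forces $b_1+b_2=0$. Substituting $b_2=-b_1$ into \eqref{eq:Delta} gives
\[
\delta=(2b_1)^2-b_1=b_1(4b_1-1).
\]

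\textbf{Step 2: the character of $\delta$.} Multiplicativity of $\eta$ gives $\eta(\delta)=\eta(b_1)\,\eta(4b_1-1)=-\eta(4b_1-1)$. The condition $\eta(1+4b_2)=-1$ now reads $\eta(1-4b_1)=-1$. Hence $\eta(4b_1-1)=\eta(-1)\,\eta(1-4b_1)=-\eta(-1)$, and therefore $\eta(\delta)=\eta(-1)$.

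This splits the $\Delta=0$ case by residue class. The subcase $\eta(\delta)=-1$ can only occur when $q\equiv3\pmod4$, and the subcase $\eta(\delta)=1$ only when $q\equiv1\pmod4$. The remaining hypothesis of Theorem~\ref{thm:main}, $b_1+b_2\neq-\tfrac14$, holds automatically because $b_1+b_2=0$.

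\textbf{Step 3: counting admissible $b_1$.} Realizability in each residue class, and the exact count, follow from counting the $b_1$ with $\eta(b_1)=-1$ and $\eta(1-4b_1)=-1$; each such $b_1$ gives the admissible pair $(b_1,-b_1)$. For $b_1\notin\{0,\tfrac14\}$, the quantity $\tfrac14(1-\eta(b_1))(1-\eta(1-4b_1))$ is the indicator of these two conditions. So the count is
\[
N=\frac14\sum_{b_1\neq0,\frac14}\Bigl(1-\eta(b_1)-\eta(1-4b_1)+\eta\bigl(b_1(1-4b_1)\bigr)\Bigr).
\]
The four sums are evaluated with the exact formulas in Lemma~\ref{lem:weil}, correcting for the two excluded points:
\begin{itemize}
\item the constant term contributes $q-2$;
\item $\sum\eta(b_1)=0-\eta(\tfrac14)=-1$;
\item $\sum\eta(1-4b_1)=0-\eta(1)=-1$;
\item the quadratic $-4b_1^2+b_1$ has nonzero discriminant $1$ and vanishes at both excluded points, so its sum is $-\eta(-4)=-\eta(-1)$.
\end{itemize}
Hence
\[
N=\frac{(q-2)+1+1-\eta(-1)}{4}=\frac{q-\eta(-1)}{4}.
\]

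This is a positive integer for every odd $q$: it is at least $1$ even for $q=3$. So the relevant case is realized in each residue class, which gives both the realizability claims and the count. The only delicate point is the bookkeeping of the excluded values $b_1=0$ and $b_1=\tfrac14$ in the character sums; the rest is direct substitution.
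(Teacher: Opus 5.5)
Your proposal is correct and follows the paper's proof: the factorization \eqref{eq:Deltafact} forces $b_2=-b_1$, multiplicativity of $\eta$ gives $\eta(\delta)=\eta(-1)$, and the count comes from the indicator product evaluated exactly with Lemma~\ref{lem:weil}. The differences are cosmetic. You remove $b_1\in\{0,\tfrac14\}$ from the sum and correct the two linear sums, whereas the paper sums over all of $\Fq$ after noting that those two terms vanish. You also state explicitly that $(q-\eta(-1))/4\ge1$, which is what gives realizability in each residue class; the paper leaves this implicit.
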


\begin{proof}
By \eqref{eq:Deltafact}, $\Delta=0$ forces $b_{2}=-b_{1}$, whence
$\delta=(2b_{1})^{2}-b_{1}=b_{1}(4b_{1}-1)$ and
$\eta(\delta)=\eta(b_{1})\,\eta(4b_{1}-1)
=-\,\eta(-1)\,\eta(1-4b_{1})=\eta(-1)$,
using $\eta(1+4b_{2})=\eta(1-4b_{1})=-1$. Thus $\eta(\delta)$ is forced to equal $\eta(-1)$;
since $\eta(-1)=1$ for $q\equiv1\pmod4$ and $\eta(-1)=-1$ for $q\equiv3\pmod4$, the first two
assertions follow. For the count, the admissible $b_{1}$ are those with
$\eta(b_{1})=-1$ and $\eta(1-4b_{1})=-1$ (the condition $b_{1}+b_{2}=0\neq-\tfrac14$ is
automatic), and their number is
\[
\frac14\sum_{b_{1}\in\Fq}\bigl(1-\eta(b_{1})\bigr)\bigl(1-\eta(1-4b_{1})\bigr)
=\frac14\Bigl(q+\sum_{b_{1}}\eta\bigl(b_{1}(1-4b_{1})\bigr)\Bigr)
=\frac{q-\eta(-1)}{4},
\]
by Lemma~\ref{lem:weil} applied with $a=-4$; the terms $b_{1}\in\{0,\tfrac14\}$ contribute
$0$ to the sum on the left.
\end{proof}

\begin{remark}\label{rem:vacuous}
Proposition~\ref{prop:real} explains the vacuous cases observed computationally: the case
$\Delta=0$, $\eta(\delta)=-1$ admits no parameters when $q\equiv1\pmod4$ (e.g.\ $q=5,9$) and
the case $\Delta=0$, $\eta(\delta)=1$ none when $q\equiv3\pmod4$ (e.g.\ $q=7$). An exhaustive
computer search over all $(q+6,3)$-arcs containing a conic shows the stronger fact that for
$q=5$ the value $A_{q+3}=(5q+9)(q-1)/2$ is not attained by any such arc, whether or not it
arises from our construction. The exact number of
admissible pairs $(b_{1},b_{2})$ in the remaining cases $\eta(\Delta)=-1$ can be determined by
similar character sum computations; we do not pursue this here.
\end{remark}

\section{Comparison with the Fan--Wang--Xu family}\label{sec:comp}

Fan, Wang and Xu \cite{FWX24} constructed a family of $[q+6,3,q+3]$ NMDS codes for every odd
prime power $q$ and determined its weight distributions. In our affine notation their
construction reads as follows: for $\beta,\gamma\in\Fq^{*}$ with
\[
\eta(\beta)=\eta(1+4\beta)=\eta(1+4\beta-4\gamma)=\eta(1+4\beta+4\gamma)=-1,
\]
the arc is $S_{F}=\mathcal{O}\cup T_{F}$ with
$T_{F}=\bigl\{(1),\,(0),\,(0,-\beta),\,(\beta,0),\,(\beta,\gamma)\bigr\}$; the
configuration is depicted in Figure~\ref{fig:fwx}, in the same projective frame as
Figure~\ref{fig:config}. Lemma~\ref{lem:identity} recovers their weight distributions at once:

\begin{proposition}\label{prop:fwxprofile}
Under the above hypotheses, $S_{F}$ is a $(q+6,3)$-arc with
\[
(i,t,e)=\Bigl(\tfrac{1+\eta(-1)}{2}+\tfrac{1-\eta(\gamma^{2}-\beta)}{2},\ 2,\ 2\Bigr),
\]
so that
\[
A_{q+3}=
\begin{cases}
\dfrac{(5q+7)(q-1)}{2}, & q\equiv 1\ (\mathrm{mod}\ 4),\ \eta(\gamma^{2}-\beta)=-1,\\[6pt]
\dfrac{(5q+5)(q-1)}{2}, & q\equiv 1\ (\mathrm{mod}\ 4),\ \eta(\gamma^{2}-\beta)=1,\\[6pt]
\dfrac{(5q+5)(q-1)}{2}, & q\equiv 3\ (\mathrm{mod}\ 4),\ \eta(\gamma^{2}-\beta)=-1,\\[6pt]
\dfrac{(5q+3)(q-1)}{2}, & q\equiv 3\ (\mathrm{mod}\ 4),\ \eta(\gamma^{2}-\beta)=1,
\end{cases}
\]
in accordance with \cite[Theorem 3]{FWX24}.
\end{proposition}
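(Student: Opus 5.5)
The plan is to follow the proof of Theorem~\ref{thm:main} step by step, with the five points $P_1=(1)$, $P_2=(0)$, $P_3=(0,-\beta)$, $P_4=(\beta,0)$, $P_5=(\beta,\gamma)$ of $T_F$. First we check that these points are distinct and lie off $\mathcal{O}$. The points $P_1,P_2$ are distinct points of $L_\infty$ different from $(\infty)$. The point $P_3$ would lie on $\mathcal{O}$ only if $0=\beta^2$. The point $P_4$ would lie on $\mathcal{O}$ only if $\beta=0$. The point $P_5$ would lie on $\mathcal{O}$ only if $\beta=\gamma^2$, which is impossible because $\eta(\beta)=-1$. Distinctness of $P_3,P_4,P_5$ follows from $\beta,\gamma\neq0$.

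Next we list, as in Step~2 of the proof of Theorem~\ref{thm:main}, six lines covering all ten pairs of $T_F$, recalling that $l_{a,b}$ is the line $x=ay+b$.
\begin{itemize}
\item $L_\infty\ni P_1,P_2$ is tangent at $(\infty)$.
\item $l_{0,0}\ni P_2,P_3$ is tangent at the origin.
\item $l_{0,\beta}\ni P_2,P_4,P_5$ is external, since $\eta(4\beta)=-1$.
\item $l_{1,\beta}\ni P_1,P_3,P_4$ is external, since $\eta(1+4\beta)=-1$.
\item $l_{1,\beta-\gamma}\ni P_1,P_5$ is external, since $\eta(1+4\beta-4\gamma)=-1$.
\item The join $m$ of $P_3$ and $P_5$ is external.
\end{itemize}
For the last line, if $\gamma=-\beta$ the join would be the line $L_{-\beta}$ through $(\infty)$. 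That case is excluded, because then $1+4\beta+4\gamma=1$ would be a square. Otherwise $m=l_{a,b}$ with $a=\beta/(\beta+\gamma)$ and $b=\beta^2/(\beta+\gamma)$. Its discriminant is
\[
a^2+4b=\beta^2(1+4\beta+4\gamma)/(\beta+\gamma)^2 ,
\]
which is a nonsquare by hypothesis.

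To show $S_F$ is an arc, we repeat the argument of Step~3 verbatim. Any other line carries at most one point of $T_F$. Among the six lines, a fourth point of $S_F$ on one of them would force two of the six lines to coincide. Pairwise distinctness of the six lines is checked by looking at points at infinity and tangent/external type. For $m$, observe that $P_3$ and $P_5$ each already lie on two of the explicit lines through $P_1$ or $P_2$. Hence $S_F$ is a $(q+6,3)$-arc, e.g.\ with trisecant $l_{0,\beta}$.

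Finally we read off $(i,t,e)$ and apply Lemma~\ref{lem:identity}.
\begin{itemize}
\item $P_1,P_2,P_3$ lie on tangents, so they are exterior.
\item By \eqref{eq:inout}, $P_4$ is interior iff $\eta(0^2-\beta)=-1$. Since $\eta(-\beta)=-\eta(-1)$, this holds iff $\eta(-1)=1$, which gives the summand $\tfrac{1+\eta(-1)}{2}$.
\item By \eqref{eq:inout}, $P_5$ is interior iff $\eta(\gamma^2-\beta)=-1$, which gives the summand $\tfrac{1-\eta(\gamma^2-\beta)}{2}$.
\item The only tangents with two points of $T_F$ are $L_\infty$ and $l_{0,0}$, so $t=2$.
\item The only external lines with three points are $l_{0,\beta}$ and $l_{1,\beta}$, so $e=2$.
\end{itemize}
Substituting into $\tau_3=\tfrac{5(q-1)}{2}+i+t+e$ and using \eqref{eq:Amin} yields the four listed values according to $q \bmod 4$ and $\eta(\gamma^2-\beta)$.

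The main obstacle is the computation for the line $m=P_3P_5$. One must check that it is never secant, which is exactly where the hypothesis $\eta(1+4\beta+4\gamma)=-1$ enters. One must also check that the degenerate case $\gamma=-\beta$ is excluded; I would verify both carefully. Everything else is a mechanical transcription of the proof of Theorem~\ref{thm:main}.
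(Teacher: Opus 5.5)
Your proposal is correct and matches the paper's proof: the same six lines cover the ten pairs, the join of $(0,-\beta)$ and $(\beta,\gamma)$ is shown external through $\eta(1+4\beta+4\gamma)=-1$, and $(i,t,e)$ is read off and substituted into Lemma~\ref{lem:identity}. The one difference is that the paper cites \cite[Theorem 3]{FWX24} for the arc property, whereas you reprove it by transplanting Step~3 of Theorem~\ref{thm:main}; there, $l_{1,\beta}$ and $l_{1,\beta-\gamma}$ are both external with the same point at infinity $(1)$, so their distinctness must come from $\gamma\neq0$, not from points at infinity or tangent/external type.
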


\begin{proof}
The ten pairs of $T_{F}$ lie on the following six lines: the tangents $L_{\infty}$
(carrying $(1),(0)$) and $l_{0,0}$ (carrying $(0),(0,-\beta)$); the external lines
$l_{0,\beta}$ (carrying $(0),(\beta,0),(\beta,\gamma)$; external as $\eta(\beta)=-1$) and
$l_{1,\beta}$ (carrying $(1),(\beta,0),(0,-\beta)$; external as $\eta(1+4\beta)=-1$); the line
$l_{1,\beta-\gamma}$ (carrying $(1),(\beta,\gamma)$; external as $\eta(1+4\beta-4\gamma)=-1$);
and the line joining $(0,-\beta)$ and $(\beta,\gamma)$, which is $l_{a,a\beta}$ with
$a=\beta/(\beta+\gamma)$ (note $\beta+\gamma\neq0$, since $\gamma=-\beta$ would force
$\eta(1+4\beta+4\gamma)=\eta(1)=1$) and discriminant
$a^{2}+4a\beta=a(a+4\beta)$, of quadratic character
\[
\eta(a)\,\eta(a+4\beta)
=\bigl[\eta(\beta)\eta(\beta+\gamma)\bigr]\cdot
\bigl[\eta(\beta)\eta(1+4\beta+4\gamma)\eta(\beta+\gamma)\bigr]=-1,
\]
so this line is external as well. That $S_{F}$ is a $(q+6,3)$-arc is proved in
\cite[Theorem 3]{FWX24};
since every pair of $T_{F}$ lies on one of the six lines above, of which exactly two are
tangents carrying two points of $T_{F}$ and exactly two are external lines carrying three, we
get $t=2$ and $e=2$. The points $(1),(0),(0,-\beta)$ lie on tangents, hence are
exterior. By \eqref{eq:inout}, applied to $(\beta,0)$,
$\eta(0^{2}-\beta)=\eta(-1)\eta(\beta)=-\eta(-1)$, so $(\beta,0)$ is interior precisely when
$q\equiv1\pmod4$; and
$(\beta,\gamma)$ is interior precisely when $\eta(\gamma^{2}-\beta)=-1$. Lemma~\ref{lem:identity}
and Lemma~\ref{lem:DL} give the displayed values.
\end{proof}

\begin{figure}[htb]
\centering
\begin{tikzpicture}[scale=0.88,
  tang/.style={blue!70!black, thick},
  extl/.style={red!75!black, thick},
  pt/.style={circle, fill=black, inner sep=1.4pt},
  cpt/.style={circle, fill=black, inner sep=1.1pt},
  lb/.style={font=\small}]
\draw[thick] (0,0) circle (1);
\draw[tang] (-2.6,1) -- (2.6,1) node[lb, right] {$L_{\infty}$};
\draw[tang] (2.0,1.464) -- (-1.15,-3.99);             
\node[lb, tang, anchor=east] at (-1.22,-4.02) {$l_{0,0}$};
\draw[tang] (-2.0,1.464) -- (0.88,-3.524);            
\node[lb, tang, anchor=west] at (0.95,-3.55) {$l_{1,-1/4}$};
\draw[extl] (-1.836,1.540) -- (-0.433,-5.750);        
\node[lb, extl, anchor=west] at (-0.36,-5.75) {$l_{1,\beta}$};
\draw[extl] (1.944,1.518) -- (-1.048,-5.797);         
\node[lb, extl, anchor=east] at (-1.12,-5.80) {$l_{0,\beta}$};
\draw[extl, dashed] (-1.132,-3.503) -- (0.351,-3.488);
\draw[extl, dashed] (-1.732,1) -- (0.383,-4.841);     
\node[pt] at (-1.732,1) {}; \node[lb, anchor=east] at (-1.95,0.78) {$(1)$};
\node[pt] at (1.732,1) {};  \node[lb, anchor=west] at (1.95,0.78) {$(0)$};
\node[pt] at (-0.866,-3.5) {};  \node[lb, anchor=south east] at (-0.95,-3.42) {$(0,-\beta)$};
\node[pt] at (-0.624,-4.76) {}; \node[lb, anchor=north east] at (-0.68,-4.90) {$(\beta,0)$};
\node[pt] at (-0.105,-3.493) {}; \node[lb, anchor=west] at (0.06,-3.78) {$(\beta,\gamma)$};
\node[cpt] at (0,1) {};
\node[cpt] at (0.866,-0.5) {};
\node[cpt] at (-0.866,-0.5) {};
\end{tikzpicture}
\caption{The Fan--Wang--Xu configuration $S_{F}$ of Proposition~\ref{prop:fwxprofile}, drawn
in the same projective frame as Figure~\ref{fig:config}. Only the two tangents $L_{\infty}$
and $l_{0,0}$ carry two points of $T_{F}$, so $t=2$; the external lines $l_{1,\beta}$ and
$l_{0,\beta}$ carry three points of $T_{F}$ each ($e=2$), and the two dashed external lines
carry two points each. The tangent $l_{1,-1/4}$ contains no point of $T_{F}$, in contrast with
Figure~\ref{fig:config}; this is the geometric source of the separation in
Theorem~\ref{thm:fullineq}.}
\label{fig:fwx}
\end{figure}

Since monomially equivalent codes have equal weight enumerators, comparing the value lists of
Theorem~\ref{thm:main} and Proposition~\ref{prop:fwxprofile} yields:

\begin{corollary}\label{cor:ineq}
Let $q$ be an odd prime power and let $\Code_{(b_{1},b_{2})}$ be as in Theorem~\ref{thm:main}.
\begin{enumerate}
\item If $\Delta=0$ and $\eta(\delta)=-1$, then $A_{q+3}=(5q+9)(q-1)/2$ and
$\Code_{(b_{1},b_{2})}$ is monomially inequivalent to every code of the family of \cite{FWX24}.
\item If $q\equiv 3\pmod 4$ and $A_{q+3}=(5q+7)(q-1)/2$ (equivalently, by
Proposition~\ref{prop:real}, $\eta(\Delta)=-1$ and $\eta(\delta)=-1$), then $\Code_{(b_{1},b_{2})}$ is monomially inequivalent to
every code of the family of \cite{FWX24}.
\end{enumerate}
\end{corollary}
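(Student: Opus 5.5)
The plan is to use the fact that monomially equivalent codes have equal weight enumerators. By Lemma~\ref{lem:DL}, the weight enumerator of an $[q+6,3,q+3]$ NMDS code is determined by $A_{q+3}$, and conversely it determines $A_{q+3}$. It therefore suffices to show that, in each of the two cases, the value of $A_{q+3}$ for $\Code_{(b_{1},b_{2})}$ differs from every value of $A_{q+3}$ occurring in the family of \cite{FWX24}. Those values are listed in Proposition~\ref{prop:fwxprofile}; that proposition already takes the arc property of $S_F$ from \cite[Theorem 3]{FWX24}, so every code in that family is NMDS and the comparison is legitimate.

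For part 1, Theorem~\ref{thm:main} gives $A_{q+3}=(5q+9)(q-1)/2$ when $\Delta=0$ and $\eta(\delta)=-1$. By Proposition~\ref{prop:fwxprofile}, the Fan--Wang--Xu family only realizes
\[
A_{q+3}\in\{(5q+7)(q-1)/2,\ (5q+5)(q-1)/2,\ (5q+3)(q-1)/2\}.
\]
Since $q>1$, the factor $q-1$ is nonzero, and $5q+9$ differs from $5q+7$, $5q+5$ and $5q+3$. Hence the value does not occur in that family, and inequivalence follows. Proposition~\ref{prop:real} shows this case arises only for $q\equiv3\pmod4$, but the argument does not depend on that.

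For part 2, we first justify the parenthetical equivalence. Theorem~\ref{thm:main} gives $A_{q+3}=(5q+7)(q-1)/2$ exactly when either $\Delta=0$ and $\eta(\delta)=1$, or $\eta(\Delta)=-1$ and $\eta(\delta)=-1$. When $q\equiv3\pmod4$, Proposition~\ref{prop:real} shows that $\Delta=0$ forces $\eta(\delta)=\eta(-1)=-1$, so the first alternative cannot occur. Thus the condition reduces to $\eta(\Delta)=-1$ and $\eta(\delta)=-1$. Then, for $q\equiv3\pmod4$, Proposition~\ref{prop:fwxprofile} shows that the family of \cite{FWX24} realizes only $(5q+5)(q-1)/2$ and $(5q+3)(q-1)/2$. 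Neither equals $(5q+7)(q-1)/2$, which concludes the argument.

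There is no real obstacle here. The only care points are the bookkeeping of which cases are realizable modulo $4$, and that the comparison uses the full weight enumerator, which is equivalent to $A_{q+3}$ via \eqref{eq:DL3}.
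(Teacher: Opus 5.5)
Your proposal is correct and follows the paper's argument. Monomially equivalent codes have equal weight enumerators, so the corollary follows by comparing the values of $A_{q+3}$ in Theorem~\ref{thm:main} with those in Proposition~\ref{prop:fwxprofile}, using Proposition~\ref{prop:real} to exclude the case $\Delta=0$, $\eta(\delta)=1$ when $q\equiv3\pmod4$, exactly as you do.
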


By Proposition~\ref{prop:real}, the hypothesis of Corollary~\ref{cor:ineq}(1) is satisfiable
precisely when $q\equiv3\pmod4$; and for $q\equiv1\pmod4$ the sets of values of $A_{q+3}$
realized by the two families coincide, so weight enumerators alone cannot separate the
families (finer counting invariants such as the complete weight enumerator can distinguish
particular pairs of codes, as in \cite[Remark 3.20]{QDY26}, but do not settle the question in
general; see the framework of \cite{LZ25}). Nevertheless, the
triple $(i,t,e)$ itself is an invariant of monomial equivalence for $q\ge9$, and it separates
the two families completely. We first record two standard facts.

\begin{lemma}\label{lem:monomial}
Let $\Code_{1},\Code_{2}$ be \emph{projective} $[n,3]$ codes over $\Fq$, i.e.\ codes whose
generator matrices have pairwise nonproportional nonzero columns, with associated point sets
$S_{1},S_{2}\subseteq\PG(2,q)$.
Then $\Code_{1}$ and $\Code_{2}$ are monomially equivalent if and only if some element of
$\mathrm{PGL}(3,q)$ maps $S_{1}$ onto $S_{2}$.
\end{lemma}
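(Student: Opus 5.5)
We prove the two implications separately, working with generator matrices $G_{1},G_{2}$ whose columns $g_{1},\dots,g_{n}$ and $h_{1},\dots,h_{n}$ represent $S_{1}$ and $S_{2}$.

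\emph{From a projectivity to a monomial equivalence.} Suppose $A\in\mathrm{GL}(3,q)$ represents an element of $\mathrm{PGL}(3,q)$ with $A(S_{1})=S_{2}$. Since the columns of each $G_{j}$ are pairwise nonproportional, each point of $S_{j}$ is represented by exactly one column. Hence there is a permutation $\pi$ of $\{1,\dots,n\}$ and scalars $\lambda_{k}\in\Fq^{*}$ with
\[
Ag_{k}=\lambda_{k}h_{\pi(k)} \qquad (1\le k\le n).
\]
In matrix form this reads $AG_{1}=G_{2}M$ for a monomial matrix $M$. Since $A$ is invertible, the rows of $AG_{1}$ span $\Code_{1}$, so $\Code_{1}=\Code_{2}M$, and the codes are monomially equivalent.

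\emph{From a monomial equivalence to a projectivity.} Suppose $\Code_{2}=\Code_{1}M$ for a monomial matrix $M$. Then $G_{1}M$ is a generator matrix of $\Code_{2}$, as is $G_{2}$. Two generator matrices of the same $3$-dimensional code differ by left multiplication by an invertible matrix, so $G_{2}=BG_{1}M$ for some $B\in\mathrm{GL}(3,q)$. The columns of $G_{1}M$ are nonzero scalar multiples of the columns of $G_{1}$, permuted. Consequently every column of $G_{2}$ is a nonzero multiple of $Bg_{k}$ for some $k$, and distinct columns of $G_{2}$ come from distinct $k$. Passing to projective points gives $B(S_{1})=S_{2}$.

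\emph{Where projectivity is used.} The two directions use only the invertibility of the change of basis and the structure of monomial matrices. The one point needing care is the first direction: without projectivity, the same point could occur as several columns, and the permutation $\pi$ would not be well defined. The pairwise nonproportionality hypothesis makes the column-to-point correspondence bijective, which removes this issue. Apart from that, no step is a real obstacle.
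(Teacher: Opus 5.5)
Your proof is correct and follows the same route as the paper: both reduce monomial equivalence to a relation $G_{2}=AG_{1}M$ with $A$ invertible and $M$ monomial (you use the equivalent form $AG_{1}=G_{2}M$ in one direction), and read it column by column as $A$ carrying the points of $S_{1}$ onto those of $S_{2}$. You write out the two implications separately and make explicit that projectivity makes the column-to-point correspondence bijective, a point the paper's one-line chain of equivalences leaves implicit.
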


\begin{proof}
$\Code_{2}=\{cM: c\in\Code_{1}\}$ for an $n\times n$ monomial matrix $M$ (Section~\ref{sec:pre}) if and only if $G_{2}=AG_{1}M$ for some
$A\in\mathrm{GL}(3,q)$, i.e.\ if and only if the columns of $G_{2}$ are, up to order and
nonzero scalars, the images of the columns of $G_{1}$ under $A$; projectively this says
exactly that the collineation induced by $A$ maps $S_{1}$ onto $S_{2}$.
\end{proof}

\begin{lemma}\label{lem:rigid}
Let $q\ge m+4$ and, for $j=1,2$, let $S_{j}=\mathcal{O}_{j}\cup T_{j}$ be a $(q+1+m,3)$-arc,
where $\mathcal{O}_{j}$ is a conic and $\#T_{j}=m$. If $g\in\mathrm{PGL}(3,q)$ maps $S_{1}$
onto $S_{2}$, then $g(\mathcal{O}_{1})=\mathcal{O}_{2}$ and $g(T_{1})=T_{2}$; consequently the
triples $(i,t,e)$ of Lemma~\ref{lem:identity} associated with $S_{1}$ and $S_{2}$ coincide.
\end{lemma}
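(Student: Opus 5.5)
The strategy is to show that $g(\mathcal{O}_{1})$ is a conic contained in $S_{2}$, and that the only conic contained in $S_{2}$ is $\mathcal{O}_{2}$ itself. Since $g\in\mathrm{PGL}(3,q)$ maps conics to conics, $\mathcal{C}:=g(\mathcal{O}_{1})$ is a conic with $\mathcal{C}\subseteq S_{2}$. It then suffices to prove $\mathcal{C}=\mathcal{O}_{2}$. Once this holds, $g(T_{1})=S_{2}\setminus\mathcal{O}_{2}=T_{2}$. Moreover, $g$ preserves the incidence structure: tangent, secant and external lines of $\mathcal{O}_{1}$ go to lines of the same type for $\mathcal{O}_{2}$, and interior (respectively exterior) points go to interior (respectively exterior) points, because these notions are defined by the number of tangents through a point. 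Hence $i$, $t$ and $e$, which are defined purely in these terms, coincide.

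To prove $\mathcal{C}=\mathcal{O}_{2}$, I count. Since $\mathcal{C}\subseteq\mathcal{O}_{2}\cup T_{2}$ and $\#T_{2}=m$, we get $\#(\mathcal{C}\cap\mathcal{O}_{2})\ge q+1-m$. Two distinct conics (irreducible, since they are ovals) meet in at most $4$ points by Bézout; more elementarily, five points in general position determine a unique conic. So if $\mathcal{C}\neq\mathcal{O}_{2}$, then $q+1-m\le4$, that is, $q\le m+3$. This contradicts the hypothesis $q\ge m+4$. Therefore $\mathcal{C}=\mathcal{O}_{2}$.

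The main point needing care is the intersection bound. To avoid invoking Bézout over $\overline{\Fq}$, I will use the elementary fact that any five points of a conic in $\PG(2,q)$, no three collinear, lie on a unique conic. This is because the conditions imposed on the six coefficients of a quadratic form by five such points are linearly independent. Since $q+1-m\ge5$, the two conics share at least five points of $\mathcal{O}_{2}$, which are in general position because $\mathcal{O}_{2}$ is an arc. Hence the conics are equal.

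For the final transfer of the triple, I note that $g$ maps $T_{1}$ bijectively onto $T_{2}$ and lines onto lines, preserving intersection sizes with $\mathcal{O}$ and with $T$. The three counts therefore correspond bijectively.
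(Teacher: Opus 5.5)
Your proposal is correct and follows the paper's proof essentially step for step. The paper likewise bounds $\#\bigl(g(\mathcal{O}_{1})\cap\mathcal{O}_{2}\bigr)\ge q+1-m\ge5$, invokes the uniqueness of the conic through five points with no three collinear, and then uses that a collineation mapping conic to conic preserves line types and interior/exterior points, so $(i,t,e)$ is preserved.
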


\begin{proof}
The points of $g(\mathcal{O}_{1})$ not lying on $\mathcal{O}_{2}$ lie in $T_{2}$, so
$\#\bigl(g(\mathcal{O}_{1})\cap\mathcal{O}_{2}\bigr)\ge q+1-m\ge5$. Since five points, no three
collinear, lie on a unique conic \cite[Corollary to Theorem~7.2.1]{Hirschfeld},
$g(\mathcal{O}_{1})=\mathcal{O}_{2}$, and hence
$g(T_{1})=g(S_{1})\setminus g(\mathcal{O}_{1})=T_{2}$. A collineation mapping
$\mathcal{O}_{1}$ onto $\mathcal{O}_{2}$ maps tangent, secant and external lines of
$\mathcal{O}_{1}$ to lines of the same type with respect to $\mathcal{O}_{2}$ (the type is
determined by $\#(\ell\cap\mathcal{O}_{j})$), and therefore maps exterior and interior points
to points of the same type. The quantities $i$, $t$, $e$ are defined purely in terms of these
notions and of $T_{j}$, so they are preserved.
\end{proof}

\begin{theorem}\label{thm:fullineq}
Let $q\ge9$ be an odd prime power. Every code $\Code_{(b_{1},b_{2})}$ of Theorem~\ref{thm:main} is
monomially inequivalent to every code of the family of \cite{FWX24}.
\end{theorem}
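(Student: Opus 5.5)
The plan is to argue by contradiction through the geometric invariant $(i,t,e)$. Suppose some $\Code_{(b_{1},b_{2})}$ from Theorem~\ref{thm:main} is monomially equivalent to a Fan--Wang--Xu code with parameters $(\beta,\gamma)$. Both codes are NMDS of dimension $3$, so their generator matrices have pairwise nonproportional nonzero columns, as noted in the proof of Lemma~\ref{lem:weightline}; hence both are projective. Lemma~\ref{lem:monomial} then gives an element $g\in\mathrm{PGL}(3,q)$ mapping $S=\mathcal{O}\cup T$ onto $S_{F}=\mathcal{O}\cup T_{F}$.

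We now apply Lemma~\ref{lem:rigid} with $m=5$. Its hypothesis $q\ge m+4=9$ is exactly the assumption $q\ge9$. The lemma tells us that $g$ maps the conic onto the conic and $T$ onto $T_{F}$, and that the triples $(i,t,e)$ of the two arcs coincide.

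Next we read off the middle coordinate from results already established. The proof of Theorem~\ref{thm:main}, Step~4, gives $t=3+[\Delta=0]\in\{3,4\}$ for every admissible $(b_{1},b_{2})$. Proposition~\ref{prop:fwxprofile} gives $t=2$ for every admissible $(\beta,\gamma)$. Since $3$ and $4$ both differ from $2$, the triples cannot coincide, which is the desired contradiction.

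The only point needing care is that the value $t=2$ in Proposition~\ref{prop:fwxprofile} is a genuine count of tangents carrying two points of $T_{F}$. Every pair of points of $T_{F}$ lies on one of the six listed lines, and exactly two of those lines ($L_{\infty}$ and $l_{0,0}$) are tangents. Any other tangent meets $T_{F}$ in at most one point, so $t=2$ holds.

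I expect no step to be a real obstacle, because the argument only combines lemmas already proved. The substantive input is the rigidity in Lemma~\ref{lem:rigid}, which is where $q\ge9$ is used. Smaller $q$ would have to be handled separately, for example by computer, in line with the forthcoming Remark~\ref{rem:smallq}.
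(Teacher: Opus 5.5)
Your proposal is correct and follows the paper's proof: Lemma~\ref{lem:monomial} combined with Lemma~\ref{lem:rigid} for $m=5$ (which is exactly where $q\ge9$ enters) makes $t$ an invariant of monomial equivalence, and $t\in\{3,4\}$ for Theorem~\ref{thm:main} versus $t=2$ for the family of \cite{FWX24} separates the two families. Your additional checks are details the paper leaves implicit: projectivity of both codes via the argument of Lemma~\ref{lem:weightline}, and the fact that $t=2$ for $S_F$ is a complete count.
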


\begin{proof}
By the proof of Theorem~\ref{thm:main}, the arc of $\Code_{(b_{1},b_{2})}$ has $t\in\{3,4\}$,
whereas by Proposition~\ref{prop:fwxprofile} the arc of any code of \cite{FWX24} has $t=2$. By
Lemmas \ref{lem:monomial} and \ref{lem:rigid} (with $m=5$, so $q\ge9$ suffices), monomially
equivalent codes of this shape have equal $t$; the claim follows.
\end{proof}

\begin{remark}\label{rem:smallq}
For $q\le9$ Theorem~\ref{thm:fullineq} does not apply, but no comparison is needed there: by
\cite[Lemma 1]{FWX24}, for each admissible $\beta$ the number of admissible $\gamma$ equals
$\bigl(q-6-\eta(-1)-2\eta(2)\bigr)/4$, which vanishes precisely for $q\in\{3,5,7,9\}$ (as
$q-6\le\eta(-1)+2\eta(2)$ forces $q\le9$). Hence the family of \cite{FWX24} is empty for
$q\le9$ and nonempty for every odd prime power $q\ge11$, and no code of Theorem~\ref{thm:main}
is monomially equivalent to a code of \cite{FWX24} for \emph{any} odd prime power $q$.
\end{remark}

We record Magma-verified \cite{Magma} examples for the three enumerators; in each case the
weight enumerator coincides with Theorem~\ref{thm:main}.

\begin{example}\label{ex:q3}
$q=3$, $b_{1}=2$, $b_{2}=1$: then $\Delta=0$ and $\eta(\delta)=\eta(-1)=-1$, so
$A_{6}=24$ and $A(y)=1+24y^{6}+2y^{9}$. By Corollary~\ref{cor:ineq}(1) this $[9,3,6]$ code is
inequivalent to the codes of \cite{FWX24}.
\end{example}

\begin{example}\label{ex:q5}
$q=5$, $b_{1}=2$, $b_{2}=3$: then $\Delta=0$ and $\eta(\delta)=\eta(4)=1$, so
$A_{8}=64$ and $A(y)=1+64y^{8}+28y^{9}+16y^{10}+16y^{11}$.
\end{example}

\begin{example}\label{ex:q7}
$q=7$. For $b_{1}=3$, $b_{2}=3$ we have $\eta(\Delta)=\eta(3)=-1$ and $\eta(\delta)=\eta(4)=1$,
so $A_{10}=120$ and $A(y)=1+120y^{10}+108y^{11}+48y^{12}+66y^{13}$. For $b_{1}=6$, $b_{2}=4$ we
have $\eta(\Delta)=\eta(5)=-1$ and $\eta(\delta)=\eta(5)=-1$, so $A_{10}=126$ and
$A(y)=1+126y^{10}+90y^{11}+66y^{12}+60y^{13}$; since $7\equiv3\pmod4$, this $[13,3,10]$ code is
inequivalent to the codes of \cite{FWX24} by Corollary~\ref{cor:ineq}(2). For $b_{1}=3$,
$b_{2}=4$ we have $\Delta=0$ and $\eta(\delta)=\eta(6)=-1$, so $A_{10}=132$ and
$A(y)=1+132y^{10}+72y^{11}+84y^{12}+54y^{13}$; this code is inequivalent to the codes of
\cite{FWX24} by Corollary~\ref{cor:ineq}(1).
\end{example}

\section{A sixth point: $[q+7,3,q+4]$ NMDS codes}\label{sec:ext}

Throughout this section let $q$ be odd and let $(b_{1},b_{2})$ satisfy the hypotheses of
Theorem~\ref{thm:main} with $\eta(\Delta)=-1$, so that $l_{A,B}$ is an external line carrying
the two points $P_{3},P_{5}$ of $T$. We seek a sixth point on $l_{A,B}$: for
$v_{6}\in\Fq$ put
\[
P_{6}=(u_{6},v_{6}),\qquad u_{6}:=Av_{6}+B,
\qquad A=\frac{b_{1}}{\,b_{1}+b_{2}+\tfrac14\,},\quad B=Ab_{2}.
\]
Since $l_{A,B}$ is external, $P_{6}\notin\mathcal{O}$ automatically. Writing
$v_{4}=b_{1}-b_{2}$ for the second coordinate of $P_{4}$, and recalling
$\delta=(b_{1}-b_{2})^{2}-b_{1}$ from \eqref{eq:Delta}, define
\begin{equation}\label{eq:g}
g(v_{6}):=(u_{6}-b_{1})^{2}
+4\bigl(b_{1}v_{6}-(b_{1}-b_{2})u_{6}\bigr)\bigl(v_{6}-v_{4}\bigr),
\end{equation}
which is, up to the square factor $(v_{6}-v_{4})^{2}$, the discriminant of the line joining
$P_{6}$ and $P_{4}$; as $v_{6}\neq v_{4}$ for the points considered below, this factor is a
nonzero square, so $\eta(g(v_{6}))$ equals the quadratic character of that discriminant.

\begin{theorem}\label{thm:q7}
Let $q$ be an odd prime power, let $(b_{1},b_{2})$ satisfy the hypotheses of
Theorem~\ref{thm:main} with $\eta(\Delta)=-1$, and let
$v_{6}\in\Fq\setminus\bigl\{b_{1}+\tfrac14,\,-b_{2},\,b_{1}-b_{2}\bigr\}$ satisfy
\[
\eta(u_{6})=-1,\qquad
\eta\bigl(1+4(u_{6}-v_{6})\bigr)=-1,\qquad
g(v_{6})=0\ \text{ or }\ \eta\bigl(g(v_{6})\bigr)=-1 .
\]
Then $S\cup\{P_{6}\}$ is a $(q+7,3)$-arc, and the code $\Code'$ generated by
$\bigl(G_{(b_{1},b_{2})}\mid P_{6}\bigr)$ is a $[q+7,3,q+4]$ NMDS code over $\Fq$ with
\[
A_{q+4}=\Bigl(3q+3+\tfrac{1-\eta(\delta)}{2}+\tfrac{1-\eta(v_{6}^{2}-u_{6})}{2}
+[\,g(v_{6})=0\,]\Bigr)(q-1),
\]
where moreover $g(v_{6})=0$ forces $\eta(\delta)=\eta(v_{6}^{2}-u_{6})=1$; consequently
$A_{q+4}/(q-1)\in\{3q+3,\;3q+4,\;3q+5\}$. The complete weight distribution follows from
Lemma~\ref{lem:DL}.
\end{theorem}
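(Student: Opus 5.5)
I will follow the pattern of the proof of Theorem~\ref{thm:main}, now with $T'=T\cup\{P_{6}\}$ and $m=6$. The first step is to locate the line through each of the five new pairs $\{P_{6},P_{j}\}$ and check that it is tangent or external.

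\begin{itemize}
\item The pairs $P_{6}P_{3}$ and $P_{6}P_{5}$ lie on $l_{A,B}$, which is external. It now carries the three points $P_{3},P_{5},P_{6}$, and these are distinct because $v_{6}\ne b_{1}+\tfrac14$ and $v_{6}\ne-b_{2}$.
\item The pair $P_{6}P_{2}$ lies on $l_{0,u_{6}}$, which is external since $\eta(u_{6})=-1$.
\item The pair $P_{6}P_{1}$ lies on $l_{1,u_{6}-v_{6}}$, which is external since $\eta(1+4(u_{6}-v_{6}))=-1$.
\item The pair $P_{6}P_{4}$ lies on the joining line. It is affine and not of type $L_{a}$ because $v_{6}\ne v_{4}$, and its discriminant has the character of $g(v_{6})$. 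So it is tangent when $g(v_{6})=0$ and external when $\eta(g(v_{6}))=-1$.
\end{itemize}

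Hence every pair of $T'$ lies on a tangent or an external line, and so no secant carries two points of $T'$.

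Next I show that no four points of $S'=S\cup\{P_{6}\}$ are collinear. As in Step~3, it suffices to show that the ten lines spanned by pairs of $T'$ are pairwise distinct, apart from the intended coincidence $P_{3}P_{5}P_{6}$. Equivalently, $P_{6}$ lies on none of the other five lines of Step~2.

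\begin{itemize}
\item $P_{6}\notin L_{\infty}$, since $P_{6}$ is affine.
\item $P_{6}\notin l_{0,b_{1}}$ and $P_{6}\notin l_{1,b_{2}}$. Otherwise $l_{A,B}$ would meet that line in $P_{6}$ as well as in $P_{3}$ or $P_{5}$, forcing $P_{6}\in\{P_{3},P_{4},P_{5}\}$, which the excluded values of $v_{6}$ prevent.
\item $P_{6}\notin l_{1,-1/4}$ and $P_{6}\notin l_{0,0}$, since $l_{A,B}$ meets these lines only at $P_{3}$ and $P_{5}$ respectively.
\end{itemize}

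I must also check that the three new external lines $l_{0,u_{6}}$, $l_{1,u_{6}-v_{6}}$ and $P_{4}P_{6}$ are not among the old lines. Here $l_{0,u_{6}}\ne l_{0,b_{1}}$, since equality would put $P_{6}$ on $l_{0,b_{1}}$; the other cases are handled the same way. This gives the $(q+7,3)$-arc and, by Lemma~\ref{lem:weightline}, the NMDS property.

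For the count I apply Lemma~\ref{lem:identity} with $m=6$, which gives $\tfrac{6(q-1)}{2}=3q-3$ plus $i+t+e$.

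\begin{itemize}
\item The interior points can only be $P_{4}$ and $P_{6}$, since the others lie on tangents. By \eqref{eq:P4} and \eqref{eq:inout}, $i=\tfrac{1-\eta(\delta)}{2}+\tfrac{1-\eta(v_{6}^{2}-u_{6})}{2}$.
\item The tangents carrying two points are $L_{\infty}$, $l_{1,-1/4}$, $l_{0,0}$, plus $P_{4}P_{6}$ when $g(v_{6})=0$. Here $l_{A,B}$ is no longer tangent, because $\eta(\Delta)=-1$. So $t=3+[g(v_{6})=0]$.
\item The external lines carrying three points are $l_{0,b_{1}}$, $l_{1,b_{2}}$ and $l_{A,B}$, so $e=3$.
\end{itemize}

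Summing gives $A_{q+4}/(q-1)=3q+3+i+[g=0]$, as claimed.

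The main obstacle is the additional claim that $g(v_{6})=0$ forces both $P_{4}$ and $P_{6}$ to be exterior. A tangent $P_{4}P_{6}$ makes each of them lie on a tangent, so both are exterior and hence $\eta(\delta)=\eta(v_{6}^{2}-u_{6})=1$. This argument is geometric, so no computation is needed. The range $\{3q+3,3q+4,3q+5\}$ then follows, because $i\le2$ and $i=2$ excludes $g(v_{6})=0$.

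The remaining routine risk is the distinctness bookkeeping. I would double-check in particular that $P_{4}P_{6}\ne l_{0,b_{1}}$ and $P_{4}P_{6}\ne l_{1,b_{2}}$. Equality with either would force $P_{6}$ onto that line, which is already excluded above. Finally, the weight distribution follows from \eqref{eq:DL3}.
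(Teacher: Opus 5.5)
Your proposal is correct and follows essentially the same route as the paper. You place the five new pairs on $l_{A,B}$, $l_{0,u_6}$, $l_{1,u_6-v_6}$ and $P_4P_6$, and you show that $P_6$ lies on none of the other distinguished lines, so the Step~3 distinctness argument carries over; you then read off $i$, $t=3+[g(v_6)=0]$ and $e=3$ from Lemma~\ref{lem:identity}, using the same geometric fact that a tangent $P_4P_6$ makes both points exterior. There are two cosmetic slips: the pairs of $T'$ span nine distinct lines, not ten, and showing that the three new lines differ from $l_{A,B}$ is not literally ``the same way'' as for $l_{0,b_1}$, though it is immediate because $l_{A,B}$ contains none of $P_1,P_2,P_4$.
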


\begin{proof}
Write $T'=T\cup\{P_{6}\}$. First, $P_{6}$ is distinct from $P_{1},\dots,P_{5}$: it is affine
with $u_{6}\neq0$ (so $P_{6}\neq P_{1},P_{2},P_{5}$), its second coordinate avoids those of
$P_{3},P_{5}$, and $P_{4}\notin l_{A,B}$ by Step~3 of the proof of Theorem~\ref{thm:main}.

We check that every pair of points of $T'$ lies on a tangent or external line, and that no line
carries four points of $S\cup\{P_{6}\}$. The pairs within $T$ are handled by
Theorem~\ref{thm:main}, and none of the six distinguished lines of its proof acquires $P_{6}$
except $l_{A,B}$: indeed $P_{6}\in L_{\infty}$ is impossible ($P_{6}$ affine);
$P_{6}\in l_{1,-1/4}$ or $P_{6}\in l_{0,b_{1}}$ would force $P_{6}=l_{A,B}\cap l_{1,-1/4}=P_{3}$
or $P_{6}=l_{A,B}\cap l_{0,b_{1}}=P_{3}$; $P_{6}\in l_{0,0}$ would force $u_{6}=0$; and
$P_{6}\in l_{1,b_{2}}$ would force $P_{6}=l_{A,B}\cap l_{1,b_{2}}=P_{5}$. (Each of these
intersections is $P_{3}$ or $P_{5}$ because the two lines involved already share that point.)
Thus $l_{A,B}$ becomes an external line carrying the three points $P_{3},P_{5},P_{6}$, which is
permitted, and the pairs $\{P_{6},P_{3}\}$, $\{P_{6},P_{5}\}$ are covered.

For the remaining three pairs, the hypotheses on $v_{6}$ do the work. The line $P_{6}P_{1}$
is $l_{1,\,u_{6}-v_{6}}$, which the hypothesis $\eta(1+4(u_{6}-v_{6}))=-1$ makes external;
this is precisely what the configuration needs, since the two tangents through the exterior
point $P_{1}$, namely $L_{\infty}$ and $l_{1,-1/4}$, already carry two points of $T$ each,
while a secant $P_{6}P_{1}$ would carry four points of $S'$. Likewise $P_{6}P_{2}=l_{0,u_{6}}$ is external by
$\eta(u_{6})=-1$. For $P_{6}P_{4}$: the exclusion $v_{6}\neq b_{1}-b_{2}$ rules out the secant
$L_{v_{6}}$, so $P_{6}P_{4}=l_{a,c}$ with $a=(u_{6}-b_{1})/(v_{6}-v_{4})$ and
$c=u_{6}-av_{6}$, and $a^{2}+4c=g(v_{6})/(v_{6}-v_{4})^{2}$; the hypothesis on $g$ makes this
line tangent or external. Finally, none of these three lines carries a further point of $T$:
membership of any $P_{j}$ ($j\le5$) would identify the line with one of the six distinguished
lines and hence force $P_{6}\in\{P_{3},P_{5}\}$, as above. Therefore $S\cup\{P_{6}\}$ is a
$(q+7,3)$-arc and $\Code'$ is a $[q+7,3,q+4]$ NMDS code.

For the weight distribution we apply Lemma~\ref{lem:identity} with $m=6$. Every pair of $T'$
lies on a unique line, and these lines are exactly: the tangents $L_{\infty}$, $l_{1,-1/4}$,
$l_{0,0}$ (two points each), the external lines $l_{0,b_{1}}$, $l_{1,b_{2}}$, $l_{A,B}$ (three
points each), and the three lines $P_{6}P_{1}$, $P_{6}P_{2}$, $P_{6}P_{4}$ (two points each,
external except possibly $P_{6}P_{4}$). Hence $e=3$, $t=3+[\,g(v_{6})=0\,]$, and
$i=\frac{1-\eta(\delta)}{2}+\frac{1-\eta(v_{6}^{2}-u_{6})}{2}$ by \eqref{eq:inout}. If
$g(v_{6})=0$ then $P_{6}P_{4}$ is a tangent through both $P_{4}$ and $P_{6}$, so both points
are exterior and the two character terms vanish. Lemma~\ref{lem:identity} and \eqref{eq:Amin}
give the stated $A_{q+4}$, and Lemma~\ref{lem:DL} the full distribution.
\end{proof}

\begin{proposition}\label{prop:q7exist}
Admissible triples $(b_{1},b_{2},v_{6})$ in Theorem~\ref{thm:q7} exist for every odd prime
power $q\ge11$, and for no odd $q\le9$.
\end{proposition}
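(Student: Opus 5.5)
The plan has two halves: an explicit character-sum count for large $q$ and a finite computer check for the rest. The abstract already indicates this: exact and Weil-type estimates plus a finite verification.

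\emph{Existence for large $q$.} We write the admissibility conditions as products of indicator factors $\tfrac12\bigl(1-\eta(\cdot)\bigr)$ and sum over all variables. A triple $(b_1,b_2,v_6)$ is admissible when
\[
\eta(b_1)=\eta(1+4b_2)=-1,\qquad \eta(\Delta)=-1 .
\]
The condition $b_1+b_2\neq-\tfrac14$ is implied by $\Delta\neq0$ together with $b_1\neq 0$. The variable $v_6$ must avoid three excluded values and satisfy
\[
\eta(u_6)=\eta\bigl(1+4(u_6-v_6)\bigr)=-1,\qquad \eta(g(v_6))=-1 .
\]
We drop the option $g=0$, since we only need a lower bound. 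We reparametrize so the inner conditions are tractable. Set $c:=b_1+b_2+\tfrac14\neq0$; then $A=b_1/c$ and $B=b_1b_2/c$. In terms of $v_6$:
\begin{itemize}
\item $u_6$ is linear;
\item $1+4(u_6-v_6)$ is linear in $v_6$, with slope $4(A-1)=-4(b_2+\tfrac14)/c$;
\item $g(v_6)$ is a quadratic in $v_6$.
\end{itemize}

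First fix $(b_1,b_2)$ and count $v_6$. Expand
\[
\tfrac18\sum_{v_6}\bigl(1-\eta(u_6)\bigr)\bigl(1-\eta(1+4(u_6-v_6))\bigr)\bigl(1-\eta(g)\bigr).
\]
The main term is $q/8$. The single-character sums are evaluated exactly by Lemma~\ref{lem:weil}:
\begin{itemize}
\item the two linear ones give $0$;
\item the quadratic one gives $-\eta(\text{leading coefficient of }g)$, provided its discriminant is nonzero.
\end{itemize}
The product of the two linear factors is a quadratic, so it is also exact. The remaining sums involve degree-$3$ and degree-$4$ polynomials in $v_6$. They are bounded by $2\sqrt q$ and $3\sqrt q$ via Lemma~\ref{lem:weil}, once we check they are not constant multiples of squares. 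The generic way to do this is to show that $u_6$, $1+4(u_6-v_6)$ and $g$ have pairwise distinct roots for suitable $(b_1,b_2)$; equal roots would correspond to degenerate incidences, such as $P_6$ lying on a special line. Subtracting the excluded values and the possible zeros of $g$ (at most $2+3$ values), we get
\[
N(b_1,b_2)\ge \frac{q-(1+1+1+2+2+3+3)\sqrt q-O(1)}{8}.
\]
This is positive once $q$ exceeds an explicit threshold of a few hundred.

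It then suffices to exhibit, for each large $q$, one pair $(b_1,b_2)$ satisfying the outer conditions for which these nondegeneracy statements hold. The number of pairs with $\eta(b_1)=\eta(1+4b_2)=-1=\eta(\Delta)$ can be counted the same way. Use \eqref{eq:Deltafact}, which gives $\eta(\Delta)=-\eta\bigl((b_2+\tfrac14)(b_1+b_2)\bigr)$. Fixing $b_2$ and summing over $b_1$ leaves only linear and quadratic characters, so the count is about $q^2/8$ with error $O(q)$. The degenerate pairs, where the inner polynomials share roots or $g$ is a square multiple, lie on finitely many curves, hence number $O(q)$. So a good pair exists for large $q$.

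\emph{Main obstacle: the nondegeneracy computation.} We need the resultants of $g$ with the two linear forms, and the discriminant of $g$, as polynomials in $(b_1,b_2)$. The goal is to show they are not identically zero. I would try this first by evaluating at one explicit point, say over the rationals or a generic function field, and then controlling how many $(b_1,b_2)$ make them vanish. If the bookkeeping gets heavy, an alternative is to keep $(b_1,b_2)$ free and bound a three-variable sum crudely. Either way, the Weil-type bound yields an explicit $q_0$, and existence holds for all odd prime powers $q\ge q_0$.

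\emph{Small $q$.} For the finitely many odd prime powers $11\le q<q_0$, a direct computer search exhibits an admissible triple, for instance listing one per $q$. For $q\in\{3,5,7,9\}$, an exhaustive search over all $(b_1,b_2,v_6)\in\Fq^3$ checks the conditions of Theorem~\ref{thm:q7} and finds none. For non-prime $q$ such as $9$, the search must be done over $\mathbb F_9$ rather than the integers mod $9$.
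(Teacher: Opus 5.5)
Your architecture matches the paper's: indicator products over $v_6$ for a fixed admissible pair, exact evaluation of the linear and quadratic sums by Lemma~\ref{lem:weil}, the Weil bound for the cubic and quartic sums, and a machine check for small $q$. The genuine gap is the step you yourself call the main obstacle, which is the technical core of the paper's proof (its Appendix). You never show three things:
\begin{enumerate}
\item $g$ has nonzero discriminant;
\item $g$ shares no root with $u_6$ or with $1+4(u_6-v_6)$;
\item what happens when $\deg g<2$.
\end{enumerate}
Without these, neither the exact evaluation of $\sum\eta(g)$ nor the Weil bounds for the degree-$3$ and degree-$4$ sums are justified. Your suggested remedy does not close this as described. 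A nonzero value at one rational point does not show that a resultant with integer coefficients in $(b_1,b_2)$ is nonzero modulo every odd prime. Also, the ``finitely many curves'' count needs explicit degrees before it yields the explicit $q_0$ on which your finite computer search depends. There is a small slip as well: $b_1+b_2=-\tfrac14$ is not excluded by $\Delta\neq0$, since there $\Delta=b_1^2\neq0$. It is excluded because $\Delta=b_1^2$ is then a nonzero square, contradicting $\eta(\Delta)=-1$.

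The missing computation is short, and it makes your second averaging over $(b_1,b_2)$ unnecessary, because nondegeneracy holds for \emph{every} admissible pair.
\begin{itemize}
\item At the root $v_6=-b_2$ of $u_6$ the point $P_6$ is $P_5$, and at the root $v_6=b_1+\tfrac14$ of $1+4(u_6-v_6)$ it is $P_3$. Accordingly $g(-b_2)=b_1^2(1+4b_2)$ and $g(b_1+\tfrac14)=4b_1(b_2+\tfrac14)^2$. These are square multiples of the discriminants of $l_{1,b_2}$ and $l_{0,b_1}$, and both are nonzero.
\item With $s=b_1+b_2+\tfrac14$ one has $\operatorname{disc}_v(s^2g)=\bigl(b_1(4b_2+1)s\bigr)^2\delta$. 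This is nonzero because $\delta=0$ would make the nonsquare $b_1$ a square.
\item The leading coefficient of $s^2g$ is $\tfrac14 b_1(4b_2+1)\bigl(8(b_1+b_2)+1\bigr)$. So $g$ drops degree only on the line $b_1+b_2=-\tfrac18$, where the paper checks it is still a nonconstant linear polynomial; alternatively you may simply avoid that line.
\end{itemize}
Hence $u_6$, $1+4(u_6-v_6)$ and $g$ are pairwise coprime and squarefree, and every product of distinct ones has squarefree degree between $1$ and $4$. The total error is then $7\sqrt q+O(1)$; your $13\sqrt q$ is harmless but enlarges the range left to the computer. The paper obtains at least $\tfrac18(q-7\sqrt q-46)>0$ admissible $v_6$ for $q\ge125$, and verifies $11\le q\le121$ (and nonexistence for $q\le9$) by exhaustive search.
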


\begin{proof}
For $q\le9$ nonexistence was verified by exhaustive computation, as was existence for the
$31$ odd prime powers $11\le q\le121$ (for $11\le q\le27$ the numbers of admissible triples are
$8,\,12,\,24,\,64,\,100,\,124,\,228$). Let now $q\ge125$ (so that $q-7\sqrt q-46>0$).

Both existence proofs below use the same standard device: for $x\neq0$ the factor $1-\eta(x)$
equals $2$ if $x$ is a nonsquare and $0$ if it is a square (similarly for $1+\eta(x)$), so a
product of such factors equals $2^{k}$ at the elements satisfying all $k$ character conditions
and $0$ at every other element where no argument vanishes. Summing over $\Fq$ and expanding,
the resulting character sums are evaluated by Lemma~\ref{lem:weil}: exactly in degrees one and
two, and by the Weil bound in degrees three and four. Elements at which some argument vanishes
contribute intermediate values and are accounted for separately.

Pairs $(b_{1},b_{2})$ satisfying the hypotheses of Theorem~\ref{thm:main} with
$\eta(\Delta)=-1$ exist: fixing any nonsquare $b_{1}$, the conditions
on $b_{2}$ read $\eta(h_{1})=-1$ and $\eta(h_{2})=1$, where $h_{1}=1+4b_{2}$ and
$h_{2}=(b_{2}+\tfrac14)(b_{1}+b_{2})$: the first is the hypothesis $\eta(1+4b_{2})=-1$
itself, and the second is equivalent to $\eta(\Delta)=-1$ by \eqref{eq:Deltafact}, since
$\eta(\Delta)=\eta(4b_{1})\,\eta(h_{2})=-\eta(h_{2})$ because $\eta(b_{1})=-1$. The
quadratic $h_{2}$ is squarefree (its roots $-\tfrac14$ and $-b_{1}$ differ since
$b_{1}\neq\tfrac14$, a square). Since $h_{1}=4\bigl(b_{2}+\tfrac14\bigr)$, we have
$h_{1}h_{2}=4\bigl(b_{2}+\tfrac14\bigr)^{2}(b_{1}+b_{2})$, so
$\eta(h_{1}h_{2})=\eta(b_{1}+b_{2})$ for $b_{2}\neq-\tfrac14$ and hence
$\sum_{b_{2}\in\Fq}\eta(h_{1}h_{2})=\sum_{b_{2}\neq-\tfrac14}\eta(b_{1}+b_{2})
=-\eta\bigl(b_{1}-\tfrac14\bigr)$. The exact evaluations of Lemma~\ref{lem:weil} give
$\sum_{b_{2}}\eta(h_{1})=0$ (linear) and $\sum_{b_{2}}\eta(h_{2})=-\eta(1)=-1$ (squarefree
quadratic); together with the value $-\eta(b_{1}-\tfrac14)\ge-1$ just computed, whatever the
sign of $\eta(b_{1}-\tfrac14)$,
\[
\sum_{b_{2}\in\Fq}\bigl(1-\eta(h_{1})\bigr)\bigl(1+\eta(h_{2})\bigr)
= q-\sum\eta(h_{1})+\sum\eta(h_{2})-\sum\eta(h_{1}h_{2})
\ \ge\ q-0-1-1\ =\ q-2 .
\]
Each admissible $b_{2}$ contributes $4$ to the sum; the two zeros
$b_{2}\in\bigl\{-\tfrac14,\,-b_{1}\bigr\}$ of $h_{1}h_{2}$ contribute at most $1$ and $2$
respectively (at $b_{2}=-\tfrac14$ both characters vanish and the term equals $(1-0)(1+0)=1$;
at $b_{2}=-b_{1}$ the term is $1-\eta(1-4b_{1})\le2$), and every other $b_{2}$ contributes
$0$. Hence, writing $N$ for the number of admissible $b_{2}$ distinct from the excluded value
$b_{2}=-b_{1}-\tfrac14$ (which itself contributes at most $4$), the sum is at most
$4N+1+2+4$; combined with the lower bound $q-2$ this gives $4N\ge q-9$, so
$N\ge\tfrac14\bigl(q-9\bigr)>0$.

Fix such a pair and regard the three quantities of Theorem~\ref{thm:q7} as polynomials in
$v=v_{6}$. Their coefficients involve the denominator $s=b_{1}+b_{2}+\tfrac14$, so we display
the versions rescaled by $s$ and $s^{2}$ (rescaling by a nonzero constant multiplies every
value of $\eta$ by a fixed sign, so it affects none of the evaluations below):
\[
sf_{1}(v)=b_{1}(v+b_{2}),\qquad
sf_{2}(v)=\tfrac14(4b_{2}+1)\bigl(4b_{1}+1-4v\bigr),\qquad
s^{2}g(v),
\]
where $f_{1}(v)=Av+B$ and $f_{2}(v)=1+4\bigl((A-1)v+B\bigr)$ are the quantities $u_{6}$ and
$1+4(u_{6}-v_{6})$ of Theorem~\ref{thm:q7} regarded as functions of $v=v_{6}$, and $g$ is as
in \eqref{eq:g}. Then $f_{1}$ and
$f_{2}$ are nonproportional linear polynomials (their roots $-b_{2}$ and $b_{1}+\tfrac14$
coincide only if $s=0$, which is excluded),
and direct computation (\ref{app:identities}) gives, writing
$s^{2}g=c_{2}v^{2}+c_{1}v+c_{0}$,
\[
c_{2}\;=\;\tfrac14\,b_{1}(4b_{2}+1)\bigl(8(b_{1}+b_{2})+1\bigr),
\qquad
\operatorname{disc}_{v}\bigl(s^{2}g\bigr)\;=\;\bigl(b_{1}(4b_{2}+1)\,s\bigr)^{2}\,\delta .
\]
Since $\delta\neq0$ (recall from Section~\ref{sec:main} that $\eta(b_{1})=-1$ forces
$\delta\neq0$), whenever $\deg g=2$ the polynomial $g$ is squarefree; and, as $b_{1}\neq0$
and $4b_{2}+1\neq0$, $\deg g<2$ occurs only for $b_{1}+b_{2}=-\tfrac18$. In that case
$c_{1}=-\tfrac14\,b_{1}(4b_{2}+1)\bigl(b_{1}-\tfrac1{16}\bigr)\neq0$
($b_{1}\notin\{0,\tfrac1{16}\}$, being a nonsquare), so $g$ is a nonconstant linear
polynomial; moreover $g$ is then proportional to neither $f_{1}$ nor $f_{2}$, because
$s^{2}g(-b_{2})=-\tfrac1{128}\,b_{1}^{2}(8b_{1}-1)$ and
$s^{2}g\bigl(b_{1}+\tfrac14\bigr)=\tfrac1{1024}\,b_{1}(8b_{1}-1)^{2}$ are nonzero
($b_{1}=\tfrac18$ would force $4b_{2}+1=0$). Finally $g$ is never proportional to
$f_{1}f_{2}$: the proportionality of coefficients forces $b_{2}=-\tfrac14$. The displayed
identities and the last three assertions are verified in \ref{app:identities}.

Consequently each of the seven products of distinct elements of $\{f_{1},f_{2},g\}$ has
squarefree part of degree between $1$ and $4$: this is clear when the three polynomials are
pairwise coprime, and if $f_{1}$ or $f_{2}$ divides the quadratic $g$, the square factor drops
out of the product and leaves a nonconstant squarefree part, since $g$ is squarefree and not
proportional to $f_{1}f_{2}$. Writing $\Sigma_{h}:=\sum_{v\in\Fq}\eta\bigl(h(v)\bigr)$ for a
polynomial $h$, the indicator product expands as
\[
\sum_{v\in\Fq}\bigl(1-\eta(f_{1})\bigr)\bigl(1-\eta(f_{2})\bigr)\bigl(1-\eta(g)\bigr)
= q-\Sigma_{f_{1}}-\Sigma_{f_{2}}-\Sigma_{g}
+\Sigma_{f_{1}f_{2}}+\Sigma_{f_{1}g}+\Sigma_{f_{2}g}-\Sigma_{f_{1}f_{2}g}.
\]
The sums $\Sigma_{f_{1}}$, $\Sigma_{f_{2}}$, $\Sigma_{g}$ and $\Sigma_{f_{1}f_{2}}$ are again
evaluated exactly by Lemma~\ref{lem:weil}: the two linear sums vanish and the two quadratic
sums have absolute value at most $1$, so $-\Sigma_{g}+\Sigma_{f_{1}f_{2}}\ge-2$, which yields
the term $-2$ below. The sums $\Sigma_{f_{1}g}$, $\Sigma_{f_{2}g}$ and
$\Sigma_{f_{1}f_{2}g}$ are estimated by the Weil
bound applied to their squarefree parts (at most $2\sqrt q$, $2\sqrt q$ and $3\sqrt q$ in
absolute value); the additive constants in $2\sqrt q+1$ and $3\sqrt q+2$ cover the
degenerate cases where $f_{1}$ or $f_{2}$ divides $g$: there discarding the square factor
changes the sum by at most $1$, at the zero of the discarded factor, while the remaining
squarefree sum has degree at most $2$ and is evaluated exactly by Lemma~\ref{lem:weil},
with absolute value at most $1$; hence
\[
\sum_{v\in\Fq}\bigl(1-\eta(f_{1})\bigr)\bigl(1-\eta(f_{2})\bigr)\bigl(1-\eta(g)\bigr)
\;\ge\; q-2-2\bigl(2\sqrt q+1\bigr)-\bigl(3\sqrt q+2\bigr)\;=\;q-7\sqrt q-6 .
\]
On the other hand, each $v$ with $\eta(f_{1})=\eta(f_{2})=\eta(g)=-1$ contributes $8$ to the
left-hand side, each of the at most four zeros of $f_{1}f_{2}g$ contributes at most $4$ (there one factor of
the indicator product equals $1$ and the other two are at most $2$ each), and
every other $v$ contributes $0$; hence the left-hand side is at most $8N'+16$, where $N'$ is
the number of $v$ of the first kind (note that every such $v$, and also every zero of $g$ at
which $\eta(f_{1})=\eta(f_{2})=-1$, is admissible for Theorem~\ref{thm:q7}). Combining the
two bounds, $8N'+16\ge q-7\sqrt q-6$, that is,
$N'\ge\tfrac18\bigl(q-7\sqrt q-22\bigr)$. Discarding the three excluded values of $v_{6}$,
each of which may be among the $v$ counted by $N'$, the number of admissible $v_{6}$ is at
least $\tfrac18\bigl(q-7\sqrt q-22\bigr)-3=\tfrac18\bigl(q-7\sqrt q-46\bigr)>0$.
\end{proof}

\begin{corollary}\label{cor:q7}
For every odd prime power $q\ge11$ there exists a $[q+7,3,q+4]$ NMDS code over $\Fq$ as in
Theorem~\ref{thm:q7}.
\end{corollary}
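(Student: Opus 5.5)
This follows directly from the two preceding results, so the plan is a short chaining argument. Fix an odd prime power $q\ge11$. First I will invoke Proposition~\ref{prop:q7exist} to obtain an admissible triple $(b_{1},b_{2},v_{6})$. Concretely, this is a pair $(b_{1},b_{2})$ satisfying the hypotheses of Theorem~\ref{thm:main} with $\eta(\Delta)=-1$, together with $v_{6}\in\Fq\setminus\{b_{1}+\tfrac14,-b_{2},b_{1}-b_{2}\}$ satisfying
\[
\eta(u_{6})=-1,\qquad \eta\bigl(1+4(u_{6}-v_{6})\bigr)=-1,\qquad g(v_{6})=0\ \text{or}\ \eta\bigl(g(v_{6})\bigr)=-1 .
\]

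Next I will apply Theorem~\ref{thm:q7} to this triple. It gives that $S\cup\{P_{6}\}$ is a $(q+7,3)$-arc. It also gives that the code $\Code'$ generated by $\bigl(G_{(b_{1},b_{2})}\mid P_{6}\bigr)$ is a $[q+7,3,q+4]$ NMDS code over $\Fq$, which is exactly the required code. As a by-product, the same theorem supplies the weight distribution through Lemma~\ref{lem:DL}.

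The only substantive content lies in the existence statement, and that has already been established. For $11\le q\le121$ it rests on the computer verification. For $q\ge125$ it rests on the character-sum count, which yields at least $\tfrac14(q-9)$ admissible $b_{2}$ and at least $\tfrac18(q-7\sqrt q-46)>0$ admissible $v_{6}$. Consequently I expect no real obstacle here.

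The one point I will check is that the two ranges together cover every odd prime power $q\ge11$. This holds because $125>121$ and the remaining values are exactly the ones verified computationally.
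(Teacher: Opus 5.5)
Your proposal is correct and matches the paper, where the corollary is stated without separate proof as an immediate consequence of Proposition~\ref{prop:q7exist} (an admissible triple $(b_1,b_2,v_6)$ exists for every odd prime power $q\ge11$) combined with Theorem~\ref{thm:q7}. Your coverage check is the only bookkeeping needed: the computer verification handles $11\le q\le121$, the character-sum bound handles $q\ge125$, and there is no odd prime power strictly between $121$ and $125$.
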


We close this section by recording the locality of all codes constructed in this paper, a
property of interest for locally recoverable codes (LRCs); see \cite{CM15,LH23,ZDQ26}. The
$i$-th coordinate of a code $\Code$ has \emph{locality} $r$ if there is a codeword of
$\Code^{\perp}$ of weight at most $r+1$ whose support contains $i$ (so that the $i$-th symbol of
any codeword can be recovered from at most $r$ other symbols), and $\Code$ has locality $r$ if
every coordinate does.

\begin{proposition}\label{prop:lrc}
Every code of Theorem~\ref{thm:main} and of Theorem~\ref{thm:q7} has locality $2$ and is an
optimal locally recoverable code with respect to the Cadambe--Mazumdar bound \cite{CM15}.
\end{proposition}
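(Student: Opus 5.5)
The plan has two parts: first exhibit, for every coordinate, a dual codeword of weight $3$ whose support contains that coordinate, and then check that the resulting locality $r=2$ attains the Cadambe--Mazumdar bound.

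\emph{Locality.} Let $\Code$ be one of the codes, of length $n\in\{q+6,q+7\}$. A dual codeword of weight $3$ supported on coordinates $\{a,b,c\}$ is exactly a linear dependence among three columns of the generator matrix. So it exists precisely when the three corresponding points of the arc $S$ (or $S\cup\{P_6\}$) are collinear. Hence a coordinate has locality $2$ as soon as its point lies on some trisecant line of the arc. For every point of $T$ (respectively $T'$) this follows from Lemma~\ref{lem:identity} or directly from the incidence counts. Every point of $T$ is exterior or interior to $\mathcal{O}$, so it lies on at least $\tfrac{q-1}{2}\ge1$ secants of $\mathcal{O}$. As shown in the proof of Lemma~\ref{lem:identity}, each such secant is a trisecant of the arc.

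For a point $O\in\mathcal{O}$ we need a trisecant through $O$. I would use a fixed point of $T$, say $P_1$. Its $\tfrac{q-1}{2}$ secants to $\mathcal{O}$ together with its two tangents cover all $q+1$ points of $\mathcal{O}$: each secant covers $2$ points and each tangent covers $1$, giving $(q-1)+2=q+1$ in total. The tangents through $P_1$ are $L_\infty$ and $l_{1,-1/4}$, and each of them carries a second point of $T$ (namely $P_2$, respectively $P_3$). So they are trisecants too. Hence every conic point lies on a trisecant through $P_1$. This argument uses only the configuration of Theorem~\ref{thm:main}, so it applies unchanged to Theorem~\ref{thm:q7}, where $T\subset T'$.

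For the locality to be exactly $2$ rather than $1$, note that $d(\Code^\perp)=3$ (the dual is AMDS with $k=3$). So no dual codeword has weight $\le2$, and the locality cannot be $1$.

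\emph{Optimality.} The Cadambe--Mazumdar bound for an $[n,k,d]_q$ code with locality $r$ states
\[
k\le \min_{t\ge 1}\bigl(tr+k^{(q)}_{\mathrm{opt}}(n-t(r+1),d)\bigr),
\]
where $k^{(q)}_{\mathrm{opt}}(n',d)$ denotes the largest dimension of a $q$-ary code of length $n'$ and minimum distance $d$. We take $t=1$, $r=2$, $k=3$ and $d=n-3$. The remaining length is then $n'=n-3<d$ or $n'=d$; here $n'=n-3=d$. The largest dimension of a code of length $d$ with minimum distance $d$ is $1$ (the repetition code), so the bound gives $k\le 2+1=3$. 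Our codes have $k=3$, so they attain the bound and are optimal.

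The main point needing care is the coverage of the conic points by trisecants. It rests on the tangents through $P_1$ carrying a second point of $T$; I checked this above from Step~2 of the proof of Theorem~\ref{thm:main}. The remaining steps are bookkeeping.
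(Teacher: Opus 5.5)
Your proof is correct and follows the paper's argument: weight-$3$ dual codewords correspond to trisecants, points of $T$ lie on secants of $\mathcal{O}$ that are automatically trisecants, and the Cadambe--Mazumdar bound with $\mu=1$, $r=2$ gives $k\le 2+1=3$. The one variation is at the points of $\mathcal{O}$: you use the pencil through $P_1$, whose two tangents $L_\infty$ and $l_{1,-1/4}$ carry $P_2$ and $P_3$. The paper instead argues without reference to the specific configuration: the tangent at $O_i$ carries at most two points of $T$, so for some $P_j$ the line $O_iP_j$ is a secant and hence a trisecant.
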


\begin{proof}
The coordinates of $\Code$ correspond to the points of the arc $S$, and the weight-$3$ codewords
of $\Code^{\perp}$ correspond, up to scalars, to the trisecants of $S$: three columns of the
generator matrix are linearly dependent if and only if the corresponding points are collinear,
and the support of the resulting dual codeword consists of the three corresponding
coordinates. Hence a coordinate has locality $2$ if and only if its point lies on a trisecant
of $S$. For $P\in T$ this holds because every secant of $\mathcal{O}$ through $P$ is a
trisecant of $S$ and there are at least $\tfrac{q-1}{2}\ge1$ of them. For $P=O_{i}\in
\mathcal{O}$, note that the tangent of $\mathcal{O}$ at $O_{i}$ carries at most two points of
$T$; since $\#T\ge5$, there is $P_{j}\in T$ such that the line $O_{i}P_{j}$ is not that
tangent. A non-tangent line through $O_{i}$ is a secant of $\mathcal{O}$, and by the arc
condition it carries exactly three points of $S$, namely $O_{i}$, a second point of
$\mathcal{O}$, and $P_{j}$. Hence $\Code$ has locality $2$.

For the optimality, the Cadambe--Mazumdar bound \cite{CM15} states that a $q$-ary $[n,k,d]$
code with locality $r$ satisfies
$k\le\min_{\mu\ge1}\bigl[\,\mu r+k^{(q)}_{\mathrm{opt}}\bigl(n-\mu(r+1),d\bigr)\bigr]$, the
minimum being over integers $\mu\ge1$, where
$k^{(q)}_{\mathrm{opt}}(n',d)$ is the largest dimension of a $q$-ary code of length $n'$ and
minimum distance $d$. Taking $r=2$ and $\mu=1$, the Singleton bound gives
$k^{(q)}_{\mathrm{opt}}(n-3,\,n-3)\le1$, so $k\le3$ for every code of length $n$, minimum
distance $n-3$ and locality $2$; our codes attain this with $k=3$.
\end{proof}

\begin{example}
Let $q=11$, $b_{1}=2$, $b_{2}=4$, $v_{6}=1$. Then $\eta(\Delta)=\eta(6)=-1$,
$\eta(\delta)=\eta(2)=-1$, $u_{6}=6$, $\eta(u_{6})=-1$, $\eta(1+4(u_{6}-v_{6}))=\eta(10)=-1$,
$g(1)=8$ with $\eta(8)=-1$, and $\eta(v_{6}^{2}-u_{6})=\eta(6)=-1$; thus both $P_{4}$ and
$P_{6}$ are interior and $A_{15}=(3q+5)(q-1)=380$. Direct enumeration of all $q^{3}-1$
codewords confirms that $\Code'$ is an $[18,3,15]$ NMDS code with
\[
A(y)=1+380y^{15}+390y^{16}+240y^{17}+320y^{18},
\]
in agreement with Theorem~\ref{thm:q7} and Lemma~\ref{lem:DL}.
\end{example}

\section{Concluding remarks}\label{sec:conc}

For every odd prime power $q$ we constructed $[q+6,3,q+3]$ NMDS codes by extending a conic of
$\PG(2,q)$ by five points, determined their three weight enumerators through the counting
identity of Lemma~\ref{lem:identity}, and showed that for every odd prime power $q$ the whole family
is monomially inequivalent to the family of Fan, Wang and Xu \cite{FWX24} with the same
parameters, the separating invariant being the triple $(i,t,e)$. Adding a
sixth point on the external line $l_{A,B}$ we further obtained $[q+7,3,q+4]$ NMDS codes for
every odd prime power $q\ge11$, complementing in odd characteristic the length-$(q+7)$
families known for even $q$ \cite{FWX24}. All these codes are optimal locally recoverable
codes with locality $2$ (Proposition~\ref{prop:lrc}). Together
with \cite{FWX24}, the values of $A_{q+3}/(q-1)$ realized by explicit constructions are now
$\tfrac{5q+3}{2},\tfrac{5q+5}{2},\tfrac{5q+7}{2},\tfrac{5q+9}{2}$. Lemma~\ref{lem:identity}
suggests several natural problems, which we intend to address in future work.

\begin{problem}
By Lemma~\ref{lem:identity}, any $(q+6,3)$-arc containing a conic has
$A_{q+3}=\bigl(\tfrac{5(q-1)}{2}+i+t+e\bigr)(q-1)$. Exhaustive computation for $q\le 9$ shows
that the realized values of $i+t+e$ form the set $\{3,4,5,6,7\}$ for $q=7,9$ (and $\{5,6\}$
for $q=5$), but the lower bound $3$ does not persist: over $\mathbb{F}_{11}$ the five points
$(7,0)$, $(0,7)$, $(9,10)$, $(7,4)$, $(1)$ extend the conic \eqref{eq:conic} to a
$(17,3)$-arc with $i=t=e=0$, attaining $A_{q+3}=\tfrac{5}{2}(q-1)^{2}$, while over
$\mathbb{F}_{13}$ an exhaustive search shows that the minimum of $i+t+e$ equals $1$. Determine the
exact spectrum of $i+t+e$ for all $q$; in particular, decide for which $q$ the value $0$ is
attained, and construct explicit infinite families realizing the extreme values. The
analogous question for the $(q+7,3)$-arcs of Section~\ref{sec:ext}, where the identity reads
$A_{q+4}=\bigl(3(q-1)+i+t+e\bigr)(q-1)$, is equally natural.
\end{problem}

\begin{problem}\label{prob:q7}
Determine the maximal $m$, as a function of $q$, for which a conic of $\PG(2,q)$ can be
extended by $m$ points to a $(q+1+m,3)$-arc. Iterating the character-sum method of
Proposition~\ref{prop:q7exist} gives $m\gg\log q$: each further point, sought on a suitable
external line, must satisfy one quadratic-character condition for every point already added,
and the indicator-sum estimate in the proof of Proposition~\ref{prop:q7exist} remains
positive as long as $2^{m}\sqrt q$ is small compared with $q$ (Proposition~\ref{prop:q7exist}
itself carries out the step from five points to six, with three such conditions). On the
other hand $m\le q+1$
for odd $q$: the trivial bound gives $n\le 2q+3$, and equality would make the arc a maximal
$\{2q+3;3\}$-arc, which does not exist in $\PG(2,q)$ for odd $q$ \cite{BBM97}.
Conjecturally $m\le q-2$ for $q\ge8$: the NMDS length conjecture, as reported in
\cite{LZ25}, asserts that the maximal length of an $[n,3,n-3]$ NMDS code is at most $2q-1$
for $q\ge8$. Randomized computer search suggests the truth is indeed linear in
$q$: for $q=7$ and $q=9$ a conic extends to $(n,3)$-arcs of the maximal sizes $n=15=2q+1$ and
$n=17=2q-1$ \cite{MMP,MMP02} respectively, the latter attaining the conjectural bound, and
for $q=13$ to a $(22,3)$-arc.
\end{problem}

\begin{problem}
Classify the monomial equivalence classes of $[q+6,3,q+3]$ NMDS codes whose associated arcs
contain a conic. For $q\ge9$, Lemmas~\ref{lem:monomial} and~\ref{lem:rigid} show
that any monomial equivalence must map the conic to the conic, so the classes correspond to
orbits of $5$-point sets under the stabilizer $\mathrm{PGL}(2,q)$ of the conic; computer
classification gives $6$, $28$ and $119$ classes for $q=5,7,9$ respectively; the
$(q+7,3)$-arcs of Section~\ref{sec:ext} should be classified likewise. For a related
equivalence classification in even characteristic, based on regular hyperovals, see \cite{LHFZ26}.
\end{problem}

\appendix
\renewcommand{\thesection}{Appendix A}
\section{The polynomial identities of Proposition~\ref{prop:q7exist}}\label{app:identities}

We keep the notation of Section~\ref{sec:ext}: $s=b_{1}+b_{2}+\tfrac14\neq0$,
$A=b_{1}/s$, $B=Ab_{2}$, $u=Av+B$, $v_{4}=b_{1}-b_{2}$, and
$g=(u-b_{1})^{2}+4(b_{1}v-v_{4}u)(v-v_{4})$. All identities below are one-line expansions
from the following four:
\begin{align}
su&=b_{1}(v+b_{2}), \tag{A.1}\\
sf_{2}=s+4su-4sv&=\tfrac14(4b_{2}+1)(4b_{1}+1-4v), \tag{A.2}\\
s(u-b_{1})&=b_{1}\bigl(v-b_{1}-\tfrac14\bigr), \tag{A.3}\\
s(b_{1}v-v_{4}u)&=b_{1}\bigl[(2b_{2}+\tfrac14)v-(b_{1}-b_{2})b_{2}\bigr]. \tag{A.4}
\end{align}
Indeed, (A.1) is immediate; for (A.2), $s+4b_{1}(v+b_{2})-4sv
=(b_{1}+b_{2}+\tfrac14+4b_{1}b_{2})-(4b_{2}+1)v
=\tfrac14(4b_{1}+1)(4b_{2}+1)-(4b_{2}+1)v$; (A.3) is $b_{1}(v+b_{2})-b_{1}s$; and (A.4) is
$b_{1}sv-v_{4}b_{1}(v+b_{2})=b_{1}[(s-v_{4})v-v_{4}b_{2}]$ with $s-v_{4}=2b_{2}+\tfrac14$.

Multiplying $g$ by $s^{2}$ and using (A.3), (A.4),
\[
s^{2}g=b_{1}^{2}\bigl(v-b_{1}-\tfrac14\bigr)^{2}
+4s\,b_{1}\bigl[(2b_{2}+\tfrac14)v-(b_{1}-b_{2})b_{2}\bigr](v-v_{4}),
\]
whence the coefficients of $s^{2}g=c_{2}v^{2}+c_{1}v+c_{0}$ are
\begin{align}
c_{2}&=b_{1}\bigl[b_{1}+(8b_{2}+1)s\bigr], \tag{A.5}\\
c_{1}&=-b_{1}\bigl[2b_{1}\bigl(b_{1}+\tfrac14\bigr)
+4s(b_{1}-b_{2})\bigl(3b_{2}+\tfrac14\bigr)\bigr], \tag{A.6}\\
c_{0}&=b_{1}\bigl[b_{1}\bigl(b_{1}+\tfrac14\bigr)^{2}+4sb_{2}(b_{1}-b_{2})^{2}\bigr].
\tag{A.7}
\end{align}
Expanding $b_{1}+(8b_{2}+1)s=2b_{1}+8b_{1}b_{2}+8b_{2}^{2}+3b_{2}+\tfrac14
=\tfrac14(4b_{2}+1)\bigl(8(b_{1}+b_{2})+1\bigr)$ gives the leading coefficient stated in
Proposition~\ref{prop:q7exist}; in particular $c_{2}=0$ if and only if
$b_{1}+b_{2}=-\tfrac18$, and substituting $b_{2}=-\tfrac18-b_{1}$, $s=\tfrac18$ into (A.6)
yields, after simplification,
$c_{1}=b_{1}\bigl(b_{1}-\tfrac18\bigr)\bigl(b_{1}-\tfrac1{16}\bigr)
=-\tfrac14 b_{1}(4b_{2}+1)\bigl(b_{1}-\tfrac1{16}\bigr)\neq0$, since $b_{1}=\tfrac18$ would
force $4b_{2}+1=0$ and $\tfrac1{16}$ is a square.

Substituting (A.5)--(A.7) into $c_{1}^{2}-4c_{2}c_{0}$ and expanding, one obtains
\[
\operatorname{disc}_{v}\bigl(s^{2}g\bigr)=c_{1}^{2}-4c_{2}c_{0}
=\bigl(b_{1}(4b_{2}+1)s\bigr)^{2}\,\delta,\qquad \delta=(b_{1}-b_{2})^{2}-b_{1};
\]
both sides are polynomials in $b_{1},b_{2}$ of total degree six, and their equality is a
routine (if lengthy) expansion, which we have additionally verified in computer algebra over
$\mathbb{Q}$ and numerically over $\Fq$ for many odd prime powers $q$.

Finally, by (A.1) and (A.2),
\[
s^{2}f_{1}f_{2}=-b_{1}(4b_{2}+1)(v+b_{2})\bigl(v-b_{1}-\tfrac14\bigr)
=p_{2}v^{2}+p_{1}v+p_{0},
\]
with
\[
p_{2}=-b_{1}(4b_{2}+1),\qquad
p_{1}=\tfrac14\,b_{1}(4b_{2}+1)\bigl(4(b_{1}-b_{2})+1\bigr),\qquad
p_{0}=\tfrac14\,b_{1}b_{2}(4b_{2}+1)(4b_{1}+1).
\]
If $s^{2}g$ were proportional to $s^{2}f_{1}f_{2}$, the two
cross conditions $c_{1}p_{2}-c_{2}p_{1}=0$ and $c_{0}p_{2}-c_{2}p_{0}=0$ would hold; expanding, they factor as
$\tfrac1{16}\,b_{1}^{2}(4b_{2}+1)^{2}(4b_{1}-4b_{2}-1)(4s)$ and
$-\tfrac1{16}\,b_{1}^{2}(4b_{2}+1)^{2}(4s)\bigl(4b_{1}^{2}+b_{1}+4b_{2}^{2}+b_{2}\bigr)$;
since $b_{1}$, $4b_{2}+1$ and $4s$ are nonzero, they are respectively equivalent to
$b_{1}-b_{2}=\tfrac14$ and $4b_{1}^{2}+b_{1}+4b_{2}^{2}+b_{2}=0$. Substituting $b_{1}=b_{2}+\tfrac14$ into the latter
gives $4\bigl(b_{2}+\tfrac14\bigr)^{2}+\bigl(b_{2}+\tfrac14\bigr)+4b_{2}^{2}+b_{2}
=8\bigl(b_{2}+\tfrac14\bigr)^{2}=0$, forcing $b_{2}=-\tfrac14$, which is excluded.

Finally, substituting $b_{2}=-\tfrac18-b_{1}$ (the degenerate case $c_{2}=0$) into $s^{2}g$
and evaluating at the roots $-b_{2}=b_{1}+\tfrac18$ of $f_{1}$ and $b_{1}+\tfrac14$ of
$f_{2}$ gives
\[
s^{2}g\bigl(b_{1}+\tfrac18\bigr)=-\tfrac{1}{128}\,b_{1}^{2}(8b_{1}-1),\qquad
s^{2}g\bigl(b_{1}+\tfrac14\bigr)=\tfrac{1}{1024}\,b_{1}(8b_{1}-1)^{2},
\]
the nonvanishing values used in the proof of Proposition~\ref{prop:q7exist}.

\section*{Acknowledgements}
Declaration of generative AI and AI-assisted technologies: during the preparation of this
work, the authors used Claude (Anthropic) in order to assist with language editing, \LaTeX{}
formatting, bibliography formatting, and the generation of verification scripts used to check
computational results. After using this tool/service, the authors reviewed and edited the
content as needed and take full responsibility for the content of the publication.

\section*{Data availability statement}

The Magma and Python scripts supporting the computational verifications reported in this
paper are available from the corresponding author upon reasonable request.

\section*{Conflict of interest}

The authors declare that they have no conflict of interest.

\end{document}